\newif\ifarxiv
\newif\iftodos
\newif\ifcomment
\newif\ifanonomous
\useunder{\uline}{\ul}{}
\def\BibTeX{{\rm B\kern-.05em{\sc i\kern-.025em b}\kern-.08em
    T\kern-.1667em\lower.7ex\hbox{E}\kern-.125emX}}
\definecolor{GV}{RGB}{255, 105, 180}
\definecolor{Biao}{RGB}{251, 139, 35}
\definecolor{Jason}{RGB}{192, 194, 201}
\newcommand{\sh}{\hat{\sigma}}
\newcommand{\rh}{\hat{\rho}}
\newcommand{\sx}{\sigma_X}
\newcommand{\sy}{\sigma_Y}
\newcommand{\sz}{\sigma_Z}
\newcommand{\sk}{\sigma_K}
\newcommand{\shx}{\hat{\sigma}_X}
\newcommand{\shy}{\hat{\sigma}_Y}
\newcommand{\shz}{\hat{\sigma}_Z}
\newcommand{\shk}{\hat{\sigma}_K}
\begin{document}

\title{Towards Quantum Universal Hypothesis Testing

\ifarxiv

\else

\thanks{B. Chen would like to acknowledge the support of a Google gift fund}

\fi
}

\ifarxiv

\author[1]{Grootveld, Arick}

\author[1]{Yang, Haodong}

\author[1]{Chen, Biao
\thanks{B. Chen would like to acknowledge the support of a Google gift fund}}

\author[1]{Gandikota, Venkata}

\author[1]{Pollack, Jason}

\affil[1]{Syracuse University, College of Engineering and Computer Science}


\else

\author{
    \IEEEauthorblockN{Arick Grootveld}
    \IEEEauthorblockA{\textit{Syracuse University}\\
    Syracuse, United States
    }
    
    \and
    
    \IEEEauthorblockN{Haodong Yang}
    \IEEEauthorblockA{\textit{Syracuse University}\\
    Syracuse, United States
    }
    
    \and

    \IEEEauthorblockN{Biao Chen}
    \IEEEauthorblockA{\textit{Syracuse University}\\
    Syracuse, United States
    }
    
    \and
    
    \IEEEauthorblockN{Venkata Gandikota}
    \IEEEauthorblockA{\textit{Syracuse Unversity}\\
    Syracuse, United States
    }
    
    \and
    
    \IEEEauthorblockN{Jason Pollack}
    \IEEEauthorblockA{\textit{Syracuse University}\\
    Syracuse, United States
    }

}

\fi

\maketitle

\begin{abstract}

Hoeffding's formulation and solution to the universal hypothesis testing (UHT) problem had a profound impact on many subsequent works dealing with asymmetric hypotheses. In this work, we introduce a quantum universal hypothesis testing framework that serves as a quantum analog to Hoeffding's UHT. Motivated by Hoeffding's approach, which estimates the empirical distribution and uses it to construct the test statistic, we employ quantum state tomography to reconstruct the unknown state prior to forming the test statistic. Leveraging the concentration properties of quantum state tomography, we establish the exponential consistency of the proposed test: the type II error probability decays exponentially quickly, with the exponent determined by the trace distance between the true state and the nominal state.

\end{abstract}

\ifarxiv

\else
\begin{IEEEkeywords}
Quantum Hypothesis Testing, Universal Hypothesis Testing, Quantum Universal Hypothesis Testing, Quantum State Tomography.
\end{IEEEkeywords}
\fi

\section{Introduction}
\label{section:intro}

Quantum hypothesis testing (QHT) dates back to 1967, when Helstrom formally introduced quantum detection theory for ``choosing one of two density operators as the better description of the state" \cite{HELSTROM1967254}. The projection detector proposed by Helstrom reduces to the classical likelihood ratio test when the two density matrices commute. Since then,  many significant results have been discovered in QHT. Examples include Helstrom's bound on minimum error probability 
in distinguishing two quantum states~\cite{Helstrom1969} and subsequent development in obtaining asymptotic performence bounds under different setups (see, e.g., \cite{audenaert2007discriminating}). 
Perhaps the most celebrated result is the  Quantum Stein's lemma~\cite{hiai1991proper}, which - thanks to the strong converse established in~\cite{ogawa2000strong} - provides a clean result that serves as a precise quantum analog of the classical Stein's lemma \cite{EInfo06}.

This parallel - both in problem formulations and in solutions - between classical and QHT is quite apparent in many of the results in the literature. 
Indeed, this observation is what motivates the present work: although there is a line of work focused on it in classical hypothesis testing, the so-called universal hypothesis testing (UHT) has remained largely unexplored in the quantum setting. UHT addresses the problem of determining whether a sequence of samples is generated from a known nominal distribution. It is, in essence, equivalent to the goodness of fit testing~\cite{CIT-004, EInfo06, Romano2012, Pearson01071900}. Hoeffding's formulation for UHT~\cite{hoeffding1965asymptotically}, in which the unknown alternative is a fixed distribution in the probability simplex, leads to a surprising result: even though this alternative is completely unknown, the achievable performance (in terms of the type II error probability, or more precisely, its exponential decay rate) is essentially the same as that of simple hypothesis testing when the alternative is fully specified.

The generality of the UHT framework makes it particularly appealing for a broad class of problems encountered in both classical and quantum information systems in which the goal is to verify or authenticate the distribution/state when given multiple copies of samples/quantum systems. The present work defines the quantum universal hypothesis testing (QUHT) problem and provides a simple solution to this problem through the use of quantum state tomography (QST). Our approach is largely motivated by Hoeffding's treatment of the classical UHT, which is, using modern terminology, a generalized likelihood ratio test (GLRT). With a GLRT, the empirical distribution is estimated using the sample sequence and used in constructing the likelihood ratio test that simplifies to the Kullback-Leibler divergence (KLD) between the empirical distribution and the nominal distribution. The concentration of the empirical distribution around the true distribution, as the number of samples increases, leads to the desired error probability performance. 

While QUHT has not been formally defined or studied to date, there have been several related works reported in the quantum literature. These include extensions of Sanov’s theorem~\cite{bjelakovic2005quantum,audenaert2008asymptotic,notzel2014hypothesis,hayashi2024another}, exentions of classical $t$ and $F$ tests for Gaussian quantum states ~\cite{kumagai2011quantum}, composite hypothesis testing~\cite{brandao2020adversarial,berta2021composite, fujiki2025quantum}, and Hoeffding-type error exponents in binary quantum testing~\cite{hayashi2007error,nagaoka2006converse,audenaert2008asymptotic}. Additionally, there are several lines of research in quantum circuits and information where QUHT can play an important role. Examples include quantum circuit equivalence~\cite{long2024equivalence,burgholzer2020advanced}, quantum state discrimination~\cite{barnett2009quantum,bae2015quantum,chefles2000quantum,bergou2010discrimination}, and quantum state verification and certification~\cite{huang2024certifying,yu2022statistical,buadescu2019quantum}.

The UHT formulation described above  is often referred to as the one-sample test, especially in the machine learning (ML) community. One can similarly define the quantum analog to the two-sample test~\cite{gretton2012kernel} in which two sequences of samples (quantum systems) are given and the task is to determine whether the two sequences are generated from the same distribution (quantum state). Surprisingly, even in the classical setting the optimal asymptotic error exponent of two sample testing was only recently shown, specifically in the continuous distribution setting  \cite{zhu2019universal} \cite{zhu2021asymptotically}. 
Our proposed solution to the QUHT can be easily generalized to the two-sample setting, since respective concentrations of the two-sample sequences will ensure exponential consistency of the hypothesis test when QST is applied to both sequences.

\section{Problem Setup}
\label{section:ProblemSetup}

Let \(\mathcal{H}\) be a finite‑dimensional Hilbert space. We denote by \(\mathcal{L}(\mathcal{H})\) the space of linear operators on \(\mathcal{H}\) and by \(\mathcal{U}(\mathcal{H})\subset\mathcal{L}(\mathcal{H})\) the unitary operators. The set of density operators \(\mathcal{D}(\mathcal{H})\subset\mathcal{L}(\mathcal{H})\) consists of positive semidefinite, unit‑trace operators~\cite{wilde2013quantum}. In the special case \(\dim(\mathcal{H}_2)=2\), \(\mathcal{D}(\mathcal{H}_2)\) is the set of qubit density operators. We write \(\Tr(\cdot)\) for the trace functional, and for any \(\sigma\in\mathcal{D}(\mathcal{H})\), its \(m\)-fold tensor product is denoted \(\sigma^{\otimes m}\in\mathcal{D}(\mathcal{H}^{\otimes m})\). Throughout, \(\ln\) denotes the logarithm base \(e\). The \textbf{trace norm} of \(\rho\in\mathcal{L}(\mathcal{H})\) is \(\|\rho\|_1=\Tr[\sqrt{\rho^\dagger\rho}]\), and the \textbf{trace distance} between \(\rho\) and \(\sigma\) is \(T(\rho,\sigma)=\tfrac12\|\rho-\sigma\|_1\). The \textbf{quantum fidelity} is \(F(\rho,\sigma)=\|\sqrt{\rho}\,\sqrt{\sigma}\|_1^2\), and the \textbf{quantum relative entropy} is \(D(\rho\Vert\sigma)=\Tr[\rho\ln\rho-\rho\ln\sigma]\)~\cite{wilde2013quantum}. We use $D_{KL}(P\|Q)$ to denote the classical Kullback-Leibler divergence.

\subsection{Quantum Hypothesis Testing}
\label{subsection:QHT}

Analogous to the classical setting, in QHT, independent copies of quantum systems are generated from the same quantum state $\sigma$, and the task is to discern if $\sigma = \rho_0$ or  $\sigma = \rho_1$,
where $\rho_0\neq  \rho_1 \subset \mathcal{D}(\mathcal{H})$. 
Or formally,
\begin{align}
    \label{equation:BinaryQuantumHypothesisTesting}
    \begin{split}
    H_0: \sigma &= \rho_0\\
    H_1: \sigma &=\rho_1.
    \end{split}
\end{align}

Any test is specified by a decision rule $M_m: \mathcal{D}(\mathcal{H})^m \to \{0, 1\}$, such that 
\begin{equation}
    \hat{H} = \begin{cases}
        H_0, & \text{if }\  M_m(\sigma^{\otimes m}) = 0\\
        H_1, & \text{if }\  M_m(\sigma^{\otimes m}) = 1
    \end{cases}
\end{equation}
Here we use $\hat{H}$ to denote the declared hypothesis and will reserve $H$ for the true hypothesis. 

For the decision rule described by $M_m$ define the type I error to be $\alpha(M_m) := \Prob_{M_m}[\hat{H} = H_1|H= H_0]$ and the type II error to be $\beta(M_m) := \Prob_{M_m}[\hat{H} = H_0 | H = H_1]$. 
The goal of any good testing strategy would be to minimize these errors, with respect to some constraints. Two common paradigms for developing tests are the so called ``symmetric'' and ``asymmetric'' \cite{cheng2024invitation} hypothesis testing settings. 
In the symmetric setting the goal is to simply minimize the total error or, when prior probabilities are available, the average error probability, whereas in the asymmetric setting, the goal is to minimize the type II error subject to a constraint on the Type I error. In this work we will be primarily concerned with the asymmetric setting, with minor attention paid to the symmetric setting. In classical literature the symmetric setting is sometimes referred to as the Bayesian setting, and the asymmetric setting is classically known as the Neyman-Pearson setting.

With a single (or finite) copy of the quantum systems, the Helstrom bound \cite{holevo1973statistical} \cite{Helstrom1969} provides a lower bound on the symmetric error probability (which is in essence the average error probability, if an equal prior is assumed):
\begin{align}
    \alpha(M_m) + \beta(M_m) \geq     \frac{1}{2} \left(1 - \frac{1}{2} \oneNorm{\rho_0^{\otimes m} - \rho_1^{\otimes m}}\right). \label{eq:Helstrom}
\end{align}


The Quantum Stein's lemma~\cite{hiai1991proper,ogawa2000strong}, on the other hand, provides an asymptotic performance bound for the asymmetric setting when a large number of copies are available. Specifically the lemma states that for all $\eps \in (0,1)$,
\begin{equation}
    \lim_{m\rightarrow \infty} \frac1m \log \beta^*(\epsilon) = - D(\rho_0\|\rho_1),
\end{equation}
where
\begin{equation}
    \beta^*(\epsilon) = \min_{M_m: \alpha(M_m)< \epsilon} \beta(M_m).
\end{equation}

The Quantum Stein's lemma provides an operational meaning for the quantum relative entropy that is a precise analog of the classical case. 

\subsection{Quantum Universal Hypothesis Testing}
\label{subsection:QUHT}

\ifarxiv
Before describing the QUHT problem, we find it helpful to revisit its  classical equivalent. 
Given samples $X^m \overset{\text{i.i.d.}}{\sim} Q$ from an unknown distribution, the goal is to determine if the true distribution is a known nominal distribution $P$ or not, i.e.
\begin{equation}
    \begin{split}
        H_0: Q = P\\
        H_1: Q \neq P.
    \end{split}
\end{equation}

This problem was considered in a seminal work by Hoeffding \cite{hoeffding1965asymptotically}, where he showed that the classical relative entropy, $D_{KL}(P\|Q)$, was, surprisingly, also the optimal achievable type II error exponent, when the type I error decayed in a bounded manner. One subtle note about his work was his consideration of the error under the alternative hypothesis, as he treated the true distribution as unknown but fixed. 
\fi

QUHT is a natural extension of the classical UHT framework to the quantum setting. 
Given an $m$-fold tensor product state $\sigma^{\otimes m}$, the goal is to determine if $\sigma$ is a known nominal state, $\rho$, i.e.
\begin{align}\label{equation:UHT}
    \begin{split}
    H_0: \sigma &= \rho\\
    H_1: \sigma &\neq \rho.
    \end{split}
\end{align}

The QUHT formulation directly mirrors Hoeffding’s formulation of the classical UHT, where the alternative is represented by an unknown but {\em fixed} state $\sigma$ that is different from $\rho$. 
For a given decision rule $M_m$, we denote the type I and II error probabilities as $\alpha(M_m)$ and $\beta(M_m)$, respectively. We note that while any decision rule should not depend on $\sigma$, as it is unknown apriori, the actual performance (in particular, type II error probability) is typically a function of $\sigma$. 
See \Cref{fig:oneSample} for a visual representation. 

The merit of the QUHT formulation lies in the generality of the alternative hypothesis. While this may motivate its use in broad applications of quantum systems, it also presents unique challenges. Many techniques developed for QHT rely on knowledge of both states — e.g., optimal tests for  structured alternatives \cite{hayashi2002optimal, bjelakovic2005quantum, notzel2014hypothesis} - and are thus not directly applicable when the alternative is entirely unspecified. An exception is found in recent work \cite{hayashi2024another}, which constructs a "quantum empirical distribution" and proves large-deviation results reminiscent of those in the classical setting. We further discuss its connection to our work in Section~\ref{section:Discussion}.

\subsection{Quantum Two-Sample Test}
\label{subsection:twosample}

The QUHT described in the previous subsection is often referred to as the one-sample test in the classical setting. Extending this formulation to the two-sample case is straightforward. Suppose now we have two sequences of quantum systems: $\rho_1,\ldots,\rho_m$ are generated from an unknown density matrix $\rho$ while $\sigma_1,\ldots,\sigma_n$ are generated from another unknown density matrix $\sigma$. The two-sample test is used to discriminate between the following two hypotheses:
\begin{align}\label{equation:two-sample}
    \begin{split}
    H_0: \sigma &= \rho\\
    H_1: \sigma &\neq \rho.
    \end{split}
\end{align}
While the form appears identical to the one-sample test (i.e., QUHT), the setup differs in one key way. The two-sample test has no knowledge of both $\rho$ and $\sigma$ and is given two sequences of samples, whereas the one-sample test has perfect knowledge of $\rho$ and uses a single sequence generated from the unknown state $\sigma$. 
See \Cref{fig:twoSampleVisual} for a visual of the problem and our proposed solution. 

\begin{subfigures}

\begin{figure}[hbt]
    \centering
    \ifarxiv
        \includegraphics[width=0.5\linewidth]{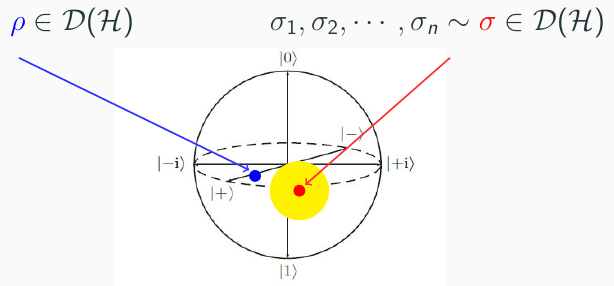}
    \else
        \includegraphics[width=1\linewidth]{Figures/qst.pdf}
    \fi
    \caption{\textbf{QUHT One Sample Problem}. 
    The yellow circle denotes the confidence region around our estimate $\hat{\sigma}$, with this region shrinking exponentially as $n$ increases. When $\rho$ is within this confidence region, we accept $H_0$. 
    }
    \label{fig:oneSample}
\end{figure}

\begin{figure}[hbt]
    \centering
    \ifarxiv
        \includegraphics[width=0.5\linewidth]{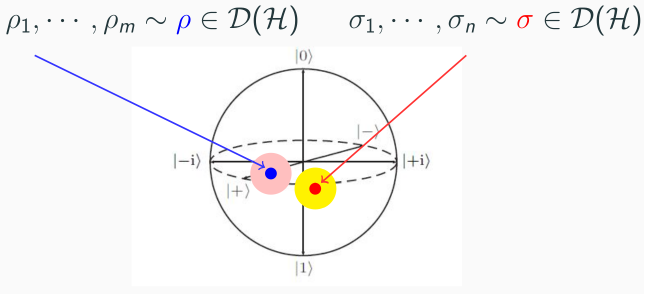}
    \else
        \includegraphics[width=1\linewidth]{Figures/twosample.pdf}
    \fi
    \caption{\textbf{QUHT Two Sample Problem}. 
    Colored circles represent the estimated states of $\rho$ and $\sigma$, while the enclosing balls denote the confidence regions surrounding the estimates. Depending on the overlap between the confidence regions we may accept or reject $H_0$.
    }
    \label{fig:twoSampleVisual}
\end{figure}

\end{subfigures}


\section{Results}
\label{section:Results}


This section presents our main results on QUHT for both the one- and two-sample settings. As alluded to in Section~\ref{section:intro}, we primarily leverage the concentration results found in the QST literature to obtain the error exponent for type II error probabilities. The only exception is when the nominal state is pure, in which case we obtain an optimal error exponent. 
\ifarxiv
Proofs of all results can be found in the appendices of this paper. 

\else
Proofs of our results can be found in \cite{grootveld2025towards}. 
\fi


~\Cref{table:Summary_QUHT_Results} summarizes our main results for both one‑sample and two‑sample variants of QUHT.  To emphasize the dominant error‑exponent terms, we have omitted those that vanish asymptotically as the number of copies of the quantum states under test grows without bound.  Here, \(m\) and \(n\) denote the numbers of identical copies of the states \(\rho\) and \(\sigma\), respectively, as introduced in Section~\ref{subsection:twosample}.  Throughout our analysis, the type I error threshold \(\alpha\in(0,1)\) is treated as a fixed constant.  In the two‑sample setting, we restrict attention to the regime in which \(m\) and \(n\) increase at the same rate and define \(k=\min\{m,n\}\) to express all bounds in terms of the effective sample size. 

\begin{table}[hbt]
\caption{Table of QUHT Results. (1S: one-sample, 2S: two-sample)}
\label{table:Summary_QUHT_Results}
\ifarxiv
\centering
\resizebox{0.8\columnwidth}{!}{%
\begin{tabular}{|c|c|c|c|}
\hline
\textbf{Setting} & \textbf{Measurement Type} & \textbf{Tomography Source} & \textbf{Type II Error} \\ \hline
\begin{tabular}[c]{@{}c@{}}1S Qudit Pure State\\ \Cref{theorem:PureState_Symmetric}\end{tabular} & Indep. &  & $[F(\rho, \sigma)]^m$ \\ \hline
\begin{tabular}[c]{@{}c@{}}1S Qubit General State\\ \Cref{theorem:Qubit_PauliMeas}\end{tabular} & Indep. & \Cref{lemma:PauliConcentration} & $\Exp{-\frac{m \oneNorm{\rho - \sigma}^2}{54}}$ \\ \hline
\begin{tabular}[c]{@{}c@{}}2S Qubit General State\\ \Cref{theorem:Qubit_PauliMeas_TwoSample}\end{tabular} & Indep. & \Cref{lemma:PauliConcentration} & $\Exp{-\frac{k \oneNorm{\rho - \sigma}^2}{216}}$ \\ \hline
\begin{tabular}[c]{@{}c@{}}1S Qudit General State\\ \Cref{theorem:QUHT_OneSample_IndepMeas}\end{tabular} & Indep. & \Cref{lemma:Tomography_IndepMeas} & $\Exp{-\frac{m \oneNorm{\rho - \sigma}^2}{86 \cdot d^3}}$ \\ \hline
\begin{tabular}[c]{@{}c@{}}2S Qudit General State\\ \Cref{theorem:QUHT_TwoSample_IndepMeas}\end{tabular} & Indep. & \Cref{lemma:Tomography_IndepMeas} & $\Exp{-\frac{k \oneNorm{\rho - \sigma}^2}{344 \cdot d^3}}$ \\ \hline
\begin{tabular}[c]{@{}c@{}}1S Qudit General State\\ \Cref{theorem:QUHT_OneSample_EntMeas}\end{tabular} & Entangled & \Cref{lemma:Tomography_EntMeas} & $\Exp{-\frac{m\oneNorm{\rho - \sigma}^2}{2}}$ \\ \hline
\begin{tabular}[c]{@{}c@{}}2S Qudit General State\\ \Cref{theorem:QUHT_TwoSample_EntMeas}\end{tabular} & Entangled & \Cref{lemma:Tomography_EntMeas} & $\Exp{-\frac{k \oneNorm{\rho - \sigma}^2}{8}}$ \\ \hline
\end{tabular}%
}
\else
\resizebox{\columnwidth}{!}{%
\begin{tabular}{|c|c|c|c|}
\hline
\textbf{Setting} & \textbf{Measurement Type} & \textbf{Tomography Source} & \textbf{Type II Error} \\ \hline
\begin{tabular}[c]{@{}c@{}}1S Qudit Pure State\\ \Cref{theorem:PureState_Symmetric}\end{tabular} & Indep. &  & $[F(\rho, \sigma)]^m$ \\ \hline
\begin{tabular}[c]{@{}c@{}}1S Qubit General State\\ \Cref{theorem:Qubit_PauliMeas}\end{tabular} & Indep. & \Cref{lemma:PauliConcentration} & $\Exp{-\frac{m \oneNorm{\rho - \sigma}^2}{54}}$ \\ \hline
\begin{tabular}[c]{@{}c@{}}2S Qubit General State\\ \Cref{theorem:Qubit_PauliMeas_TwoSample}\end{tabular} & Indep. & \Cref{lemma:PauliConcentration} & $\Exp{-\frac{m \oneNorm{\rho - \sigma}^2}{216}}$ \\ \hline
\begin{tabular}[c]{@{}c@{}}1S Qudit General State\\ \Cref{theorem:QUHT_OneSample_IndepMeas}\end{tabular} & Indep. & \Cref{lemma:Tomography_IndepMeas} & $\Exp{-\frac{m \oneNorm{\rho - \sigma}^2}{86 \cdot d^3}}$ \\ \hline
\begin{tabular}[c]{@{}c@{}}2S Qudit General State\\ \Cref{theorem:QUHT_TwoSample_IndepMeas}\end{tabular} & Indep. & \Cref{lemma:Tomography_IndepMeas} & $\Exp{-\frac{k \oneNorm{\rho - \sigma}^2}{344 \cdot d^3}}$ \\ \hline
\begin{tabular}[c]{@{}c@{}}1S Qudit General State\\ \Cref{theorem:QUHT_OneSample_EntMeas}\end{tabular} & Entangled & \Cref{lemma:Tomography_EntMeas} & $\Exp{-\frac{m\oneNorm{\rho - \sigma}^2}{2}}$ \\ \hline
\begin{tabular}[c]{@{}c@{}}2S Qudit General State\\ \Cref{theorem:QUHT_TwoSample_EntMeas}\end{tabular} & Entangled & \Cref{lemma:Tomography_EntMeas} & $\Exp{-\frac{k \oneNorm{\rho - \sigma}^2}{8}}$ \\ \hline
\end{tabular}%
}
\fi
\end{table}


\subsection{Pure State QUHT} 

We begin by establishing that for symmetric QUHT, we can achieve asymptotically optimal error performance by performing measurements in the eigenbasis of $\rho$ — a result that closely parallels known optimality results for pure state symmetric QHT \cite{kimmel2017hamiltonian} \cite{cheng2024invitation}.

\begin{theorem}[Symmetric Pure State]
    \label{theorem:PureState_Symmetric}


    Take $\rho = \ket{\phi}\bra{\phi} \in \mathcal{D}(\mathcal{H})$ a pure state density operator and $\sigma \in \mathcal{D}(\mathcal{H})$ a (potentially mixed) density operator. In the QUHT setting, given $\sigma^{\otimes m}$, there exists a decision rule $M_m$ with independent measurements, such that 
    { \small
    \begin{align}\label{equation:pure_state_error}
        \alpha(M_m) &= 0\\
        \beta(M_m) &= \left[F(\rho, \sigma)\right]^m.
    \end{align}
    }
\end{theorem}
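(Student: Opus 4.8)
The plan is to exhibit an explicit two-outcome projective measurement aligned with the one-dimensional eigenspace of the pure nominal state, apply it independently to each of the $m$ copies, and accept $H_0$ only when every outcome is consistent with $\rho$. The first step is to simplify the fidelity in the pure case. Since $\rho=\ket{\phi}\bra{\phi}$ is a projector, $\sqrt{\rho}=\rho$, so $F(\rho,\sigma)=\|\ket{\phi}\bra{\phi}\sqrt{\sigma}\|_1^2$. The operator $\ket{\phi}\bra{\phi}\sqrt{\sigma}$ is rank one, of the form $\ket{\phi}\bra{w}$ with $\ket{w}=\sqrt{\sigma}\ket{\phi}$, and the trace norm of a rank-one operator $\ket{a}\bra{b}$ equals $\|a\|\,\|b\|$. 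Hence $\|\ket{\phi}\bra{w}\|_1=\|w\|=\sqrt{\bra{\phi}\sigma\ket{\phi}}$, giving the key identity $F(\rho,\sigma)=\bra{\phi}\sigma\ket{\phi}=\Tr[\rho\sigma]$. This rewrites the target type II error in terms of a single-copy measurement probability, which is what drives the whole argument.

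Next I would define the test. Consider the binary projective measurement $\{\Pi_0,\Pi_1\}$ with $\Pi_0=\ket{\phi}\bra{\phi}$ and $\Pi_1=I-\ket{\phi}\bra{\phi}$, performed separately on each of the $m$ copies of $\sigma$; because each copy is measured on its own, this is a valid independent-measurement scheme as required. Let $M_m$ declare $H_0$ if and only if all $m$ measurements return outcome $0$ (the one associated with $\Pi_0$), and declare $H_1$ otherwise.

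The two error probabilities then follow by direct computation. Under $H_0$ we have $\sigma=\rho=\ket{\phi}\bra{\phi}$, so each single measurement returns outcome $0$ with probability $\Tr[\Pi_0\rho]=|\langle\phi|\phi\rangle|^2=1$; all $m$ outcomes are therefore $0$ almost surely, $M_m$ never rejects, and $\alpha(M_m)=0$. Under $H_1$, a single measurement returns outcome $0$ with probability $\Tr[\Pi_0\sigma]=\bra{\phi}\sigma\ket{\phi}=F(\rho,\sigma)$, and since the $m$ measurements are independent, the probability that every outcome equals $0$, which is exactly the event that $M_m$ erroneously accepts $H_0$, is $[F(\rho,\sigma)]^m$. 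This yields $\beta(M_m)=[F(\rho,\sigma)]^m$, completing the construction.

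This proof is largely computational rather than deep, and I do not expect a genuine obstacle; the only place requiring care is the pure-state fidelity simplification in the first step, where one must correctly evaluate the trace norm of the rank-one operator $\ket{\phi}\bra{\phi}\sqrt{\sigma}$ and confirm it equals $\sqrt{\bra{\phi}\sigma\ket{\phi}}$. I would also remark that the vanishing type I error is special to pure states: because a pure projector returns its own outcome deterministically, no rejection can occur under $H_0$, which is what makes this eigenbasis measurement the natural analog of the optimal pure-state QHT test and explains its asymptotic optimality when compared against the Helstrom bound of \eqref{eq:Helstrom}.
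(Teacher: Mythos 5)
Your proof is correct and follows essentially the same route as the paper's: you measure each copy independently with the projective measurement $\{\ket{\phi}\bra{\phi},\, I-\ket{\phi}\bra{\phi}\}$, accept $H_0$ only when all $m$ outcomes land on $\ket{\phi}$, and compute $\alpha(M_m)=0$ and $\beta(M_m)=[F(\rho,\sigma)]^m$ via the Born rule and independence. The only cosmetic differences are that the paper first rotates $\rho$ to $\ket{0}\bra{0}$ by a unitary before defining the same test, and it invokes $\Tr[\ket{\phi}\bra{\phi}\sigma]=F(\rho,\sigma)$ without the explicit rank-one trace-norm verification you supply.
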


For such a test, we have 
{\small
\begin{equation}
    \lim_{m\to \infty} \frac{1}{m} \ln\left(\alpha(M_m) + \beta(M_m)\right) =  -\ln F(\rho, \sigma).
\end{equation}}



\ifarxiv
In \Cref{appendix:PureStateSymmetric_Proof} we prove \Cref{theorem:PureState_Symmetric} and in \Cref{appendix:PureStateSymmetric_ErrorBounds}
we replicate 
\else
Additionally we replicate 
\fi
a result from \cite{cheng2024invitation} which demonstrates our method is optimal in the sense that for any other decision rule $\tilde{M}_m$ we will have 
{ \small
\begin{equation}\label{equation:bounded_error}
    \liminf\limits_{m \to \infty} \frac{1}{m} \ln\left(\alpha(\tilde{M}_m) + \beta(\tilde{M}_m)\right)  \geq -\ln F(\rho, \sigma).
\end{equation}
}

Note that $\tilde{D}_{1/2}(\rho\| \sigma) := -\ln(F(\rho, \sigma))$ is the sandwiched Rényi Relative Entropy \cite{muller2013quantum}, \cite{wilde2014strong} between $\rho$ and $\sigma$ of order $1/2$.

\subsection{Qubit QUHT}
\label{subsection:Qubit_QUHT}

\ifarxiv
Next, we provide an example illustrating how leveraging a concentration result for a quantum state estimator allows us to derive a corresponding result in the QUHT framework. 
Let $I_2$ be the two dimensional identity matrix, and $\sigma_X, \sigma_Y, \sigma_Z$ be the standard Pauli matrices \cite{nielsen2010quantum}, and matrix $\vec{\sigma}=[\sigma_X,\sigma_Y,\sigma_Z]^T$. Any Hermitian $A\in L(\mathcal H_2)$ admits the Bloch decomposition 
\begin{equation}
    A=\frac{1}{2}\bigl(I_2 + \vec{r}\!\cdot\!\vec{\sigma}\bigr),
\end{equation}
where $\vec{r}=[\Tr[A\sigma_X],\Tr[A\sigma_Y],\Tr[A\sigma_Z]]$.

Let \(\sigma\in\mathcal{D}(\mathcal{H}_2)\) and \(m\) be divisible by three. Denote the joint state of \(m\) copies by \(\sigma^{\otimes m}\), and partition its factors into three sets \(G_i\) (\(i\in\{X,Y,Z\}\)) of size \(m/3\). On each \(G_i\), perform the \(0,1\)-valued projective measurement \(M_i\) in the eigen-basis of the Pauli operator \(\sigma_i\in\{\sigma_X,\sigma_Y,\sigma_Z\}\). The empirical Bloch coordinate is  
\( \hat r_i = \frac{3}{m}\sum_{\ell\in G_i}M_i(\sigma) \),
which satisfies \(\mathbb{E}[\hat r_i]=\Tr[\sigma_i\sigma]\), and define the vector $\hat{\vec{r}} = [\hat{r}_x, \hat{r}_y, \hat{r}_z]$. The reconstructed state is 
{\small
\begin{equation}
\hat\sigma = \tfrac12\Bigl(I_2 + \hat{\vec{r}}\!\cdot\!\vec{\sigma}\Bigr).
\end{equation}
}

Then we have the following concentration result:

\else

We provide an example of a concentration result that could be derived for qubit quantum states. 

\fi




\begin{lemma}[Pauli Operator Concentration]
    \label{lemma:PauliConcentration}

    For $\sigma$, $\hat{\sigma}$ as described above, and for $\eps > 0$, we have 
    
    {\small
    \begin{equation}\label{equation:Concentration}
        \Proba{\oneNorm{\sigma - \hat{\sigma}} > \eps} \leq 6\Exp{-\frac{m\eps^2}{54}}
    \end{equation}}
    
\end{lemma}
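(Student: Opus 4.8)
The plan is to reduce the trace-norm deviation $\oneNorm{\sigma - \sh}$ to a Euclidean deviation of the Bloch vectors and then apply a standard scalar concentration inequality coordinatewise. First I would exploit the Bloch decomposition: writing $\sigma = \tfrac12(I_2 + \vec r\cdot\vec{\sigma})$ and $\sh = \tfrac12(I_2 + \hat{\vec{r}}\cdot\vec{\sigma})$, their difference is the traceless Hermitian operator $\sigma - \sh = \tfrac12\,(\vec r - \hat{\vec{r}})\cdot\vec{\sigma}$. Using the Pauli identity $(\vec a\cdot\vec{\sigma})^2 = \lvert\vec a\rvert^2 I_2$, any operator of the form $\vec a\cdot\vec{\sigma}$ has eigenvalues $\pm\lvert\vec a\rvert$, so its trace norm equals $2\lvert\vec a\rvert$. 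Applying this with $\vec a = \tfrac12(\vec r - \hat{\vec{r}})$ yields the exact identity $\oneNorm{\sigma-\sh} = \lvert\vec r - \hat{\vec{r}}\rvert$, the Euclidean norm of the Bloch-coordinate error. This converts the operator-theoretic claim into a purely classical statement about the three scalar estimators $\hat r_i$.

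Next I would pass from the Euclidean norm to the individual coordinates by the crude but convenient bound $\lvert\vec r - \hat{\vec{r}}\rvert \le \sum_{i\in\{X,Y,Z\}}\lvert \hat r_i - r_i\rvert$, where $r_i = \Tr[\sigma_i\sigma]$. Consequently, if $\oneNorm{\sigma - \sh} > \eps$ then at least one coordinate must satisfy $\lvert\hat r_i - r_i\rvert > \eps/3$, so that
\[
\Proba{\oneNorm{\sigma-\sh} > \eps} \le \sum_{i\in\{X,Y,Z\}} \Proba{\lvert \hat r_i - r_i\rvert > \eps/3}.
\]
(Pigeonholing on the Euclidean norm directly would give the tighter threshold $\eps/\sqrt3$ and a better constant, but the $\eps/3$ split via the $\ell_1$ bound is what produces the stated denominator $54$.) This step requires no independence between the three groups, only the union bound.

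Finally I would bound each coordinate term by Hoeffding's inequality. By construction $\hat r_i$ is the empirical mean of the $m/3$ i.i.d. outcomes measured on the group $G_i$, each a bounded random variable (the $\pm1$ Pauli eigenvalue, range $2$) with mean exactly $\mathbb{E}[\hat r_i] = r_i$. Hoeffding then gives $\Proba{\lvert\hat r_i - r_i\rvert > t} \le 2\Exp{-\tfrac{(m/3)\,t^2}{2}}$; substituting $t = \eps/3$ produces $2\Exp{-\tfrac{m\eps^2}{54}}$ per coordinate, and summing over the three coordinates yields the claimed $6\Exp{-\tfrac{m\eps^2}{54}}$. I expect the only genuinely substantive step to be the first one---establishing the exact trace-norm/Bloch-distance identity via the eigenvalue computation---while the remaining steps are routine union-bound and Hoeffding bookkeeping; the one subtlety worth double-checking is the constant accounting, namely that the outcome range (fixing the Hoeffding base) together with the coordinate threshold $\eps/3$ (rather than the tighter $\eps/\sqrt3$) jointly yield the denominator $54$.
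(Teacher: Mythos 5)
Your proof is correct and takes essentially the same route as the paper's: both reduce $\oneNorm{\sigma-\hat\sigma}$ to the coordinatewise Bloch-vector errors, split the threshold as $\eps/3$ via a union bound, and apply Hoeffding's inequality to the empirical mean of $m/3$ range-$2$ outcomes per group, giving $3\cdot 2\Exp{-\frac{m(\eps/3)^2}{6}} = 6\Exp{-\frac{m\eps^2}{54}}$. The only difference is cosmetic: you first establish the exact identity $\oneNorm{\sigma-\hat\sigma}=\lvert\vec r-\hat{\vec r}\rvert$ and then relax to the $\ell_1$ bound, whereas the paper applies the trace-norm triangle inequality directly to reach the same $\ell_1$ bound; your parenthetical remark that pigeonholing on the Euclidean norm at $\eps/\sqrt{3}$ would improve the constant is correct, and indeed neither argument exploits it.
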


As the number of samples increases, the confidence region around $\sigma$ becomes more concentrated, with $\rho$ eventually residing outside it.

%
    \ifarxiv
    The proof of this result can be found in \Cref{appendix:Proof_PauliConcentration}.

    \else
    \fi

    

Using this result, we can recover a bound on universal hypothesis testing for (mixed) qubit systems. 

\begin{theorem}[Qubit One-Sample]
    \label{theorem:Qubit_PauliMeas}
    In the QUHT one-sample setting, for $\rho, \sigma \in \mathcal{D}(\mathcal{H}_2)$, given $\sigma^{\otimes m}$ and $\alpha \in (0,1)$, using only independent measurements there exists a decision rule $M_m$ so that for $m$ large enough, we will have 
    {
    \small
    \begin{align}\label{equation:TypeITypeII}
        \alpha(M_m) &\leq \alpha\\
        \beta(M_m) &\lesssim  \Exp{\frac{-m \oneNorm{\rho - \sigma}^2}{54}},
    \end{align}
    }
    where the $\lesssim$ hides an additive $O\left(m^{1/2}\right)$ term in the exponent.
\end{theorem}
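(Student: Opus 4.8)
The plan is to build the test directly on top of the state estimate $\hat\sigma$ produced by the Pauli tomography scheme, and to threshold its trace distance to the nominal state. Concretely, I would fix a threshold $\tau_m$ and take the decision rule that declares $H_0$ (i.e.\ $M_m(\sigma^{\otimes m}) = 0$) whenever $\oneNorm{\hat\sigma - \rho} \leq \tau_m$, and $H_1$ otherwise. The entire argument then reduces to two applications of the concentration bound in \Cref{lemma:PauliConcentration}, one per error type, glued together by a single triangle inequality. This is the quantum transplant of Hoeffding's empirical-distribution test, with tomographic concentration replacing Sanov-type large-deviation control.

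For the type I error, note that under $H_0$ the true state is exactly $\rho$, so $\hat\sigma$ concentrates around $\rho$. A false rejection occurs precisely when $\oneNorm{\hat\sigma - \rho} > \tau_m$, so \Cref{lemma:PauliConcentration} gives $\alpha(M_m) \leq 6\Exp{-m\tau_m^2/54}$. Setting this bound equal to $\alpha$ and solving yields the choice $\tau_m = \sqrt{54\ln(6/\alpha)/m}$, which is $\Theta(m^{-1/2})$; this is the key design step, since $\tau_m$ governs the trade-off between the two error types. With this choice the constraint $\alpha(M_m) \leq \alpha$ holds.

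For the type II error, I would condition on the fixed alternative $\sigma \neq \rho$ and use the triangle inequality. A false acceptance requires $\oneNorm{\hat\sigma - \rho} \leq \tau_m$, which together with $\oneNorm{\rho - \sigma} \leq \oneNorm{\rho - \hat\sigma} + \oneNorm{\hat\sigma - \sigma}$ forces $\oneNorm{\hat\sigma - \sigma} \geq \oneNorm{\rho - \sigma} - \tau_m$. Applying \Cref{lemma:PauliConcentration} with $\eps = \oneNorm{\rho - \sigma} - \tau_m$ (valid once this quantity is positive) then gives $\beta(M_m) \leq 6\Exp{-m(\oneNorm{\rho - \sigma} - \tau_m)^2/54}$. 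Expanding the square produces $m(\oneNorm{\rho - \sigma} - \tau_m)^2 = m\oneNorm{\rho - \sigma}^2 - 2m\tau_m\oneNorm{\rho - \sigma} + m\tau_m^2$; since $m\tau_m = \sqrt{54\ln(6/\alpha)}\,\sqrt{m} = O(m^{1/2})$ and $m\tau_m^2 = O(1)$, the cross term is the dominant correction and is $O(m^{1/2})$, which recovers exactly the claimed bound.

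The only genuine subtlety — and what forces the ``for $m$ large enough'' qualifier — is that the type II concentration step needs $\tau_m < \oneNorm{\rho - \sigma}$. Because $\tau_m \to 0$ while the separation $\oneNorm{\rho - \sigma}$ is a fixed positive constant under $H_1$, this holds for all sufficiently large $m$, and so there is no real analytic obstacle beyond carefully bookkeeping the lower-order terms in the exponent. I expect the most delicate part to be verifying that the hidden $O(m^{1/2})$ term is uniform enough in the relevant parameters to justify the asymptotic statement, rather than any deep step in the estimation itself.
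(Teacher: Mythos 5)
Your proposal is correct and follows essentially the same route as the paper: the paper proves a generic one-sample result (\Cref{lemma:OneSamp_GenericConcentrationResult}) for any tomography scheme with a concentration bound of the form $g(m)\Exp{-mCt^2}$ — same thresholded trace-distance test, same choice $c_m = \Theta(\sqrt{\ln(g(m)/\alpha)/m})$, same triangle-inequality reduction for the type II error — and then obtains \Cref{theorem:Qubit_PauliMeas} by instantiating it with \Cref{lemma:PauliConcentration} ($g(m)=6$, $C=1/54$), exactly as you do directly. Your explicit expansion of the square to isolate the $O(m^{1/2})$ cross term matches (and is, if anything, slightly more careful than) the paper's bookkeeping.
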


We can use this concentration result to define a strategy for (mixed) two-sample qubit QUHT. 

\begin{theorem}[Qubit Two-Sample]
    \label{theorem:Qubit_PauliMeas_TwoSample}
    In the QUHT two-sample setting for $\rho, \sigma \in \mathcal{D}(\mathcal{H}_2)$, suppose $\sigma^{\otimes m}$, $\rho^{\otimes n}$, $\alpha \in (0,1)$ were given, and take $k = \min\{m, n\}$. For large enough $m, n$, and using independent measurements, there exists a decision rule $M_m$ such that 
    { 
    \small
    \begin{align}
        \alpha(M_k) &\leq \alpha\\
        \beta(M_k) &\lesssim \Exp{-\frac{k \oneNorm{\rho - \sigma}^2}{216}},
    \end{align}
    }
    where $\lesssim$ hides an additive $O(k^{1/2})$ term in the exponent. 
    
\end{theorem}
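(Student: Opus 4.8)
The plan is to mirror the one-sample argument of \Cref{theorem:Qubit_PauliMeas}, but now to estimate \emph{both} unknown states by Pauli tomography and to threshold on the distance between the two reconstructions. Concretely, apply the tomography procedure preceding \Cref{lemma:PauliConcentration} to $\sigma^{\otimes m}$, and independently to $\rho^{\otimes n}$, producing estimates $\hat{\sigma}$ and $\hat{\rho}$. The decision rule $M_k$ accepts $H_0$ precisely when $\oneNorm{\hat{\rho} - \hat{\sigma}} \le \tau$, for a threshold $\tau = \tau(k,\alpha)$ to be fixed below, and declares $H_1$ otherwise. Since each estimate is built from its own disjoint block of physical copies, the two tomography errors are independent, and each marginally obeys \Cref{lemma:PauliConcentration}.

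For the type I analysis, suppose $H_0$ holds, so $\rho = \sigma$. A false rejection requires $\oneNorm{\hat{\rho} - \hat{\sigma}} > \tau$, and by the triangle inequality $\oneNorm{\hat{\rho} - \hat{\sigma}} \le \oneNorm{\hat{\rho} - \rho} + \oneNorm{\sigma - \hat{\sigma}}$, so a rejection forces at least one of the two tomography errors to exceed $\tau/2$. A union bound together with \Cref{lemma:PauliConcentration} (applied with $\eps = \tau/2$ to each block, and using $m,n \ge k = \min\{m,n\}$) gives $\alpha(M_k) \le 12\,\Exp{-k\tau^2/216}$. Choosing $\tau = \sqrt{216\ln(12/\alpha)/k} = O(k^{-1/2})$ makes this at most $\alpha$. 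This is where the constant $216 = 4\cdot 54$ enters: the factor $4$ comes from splitting the error budget in half across the two estimates and then squaring.

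For the type II analysis, suppose $H_1$ holds, so $\delta := \oneNorm{\rho - \sigma} > 0$. A false acceptance requires $\oneNorm{\hat{\rho} - \hat{\sigma}} \le \tau$, and the triangle inequality $\delta \le \oneNorm{\rho - \hat{\rho}} + \oneNorm{\hat{\rho} - \hat{\sigma}} + \oneNorm{\hat{\sigma} - \sigma}$ then forces $\oneNorm{\rho - \hat{\rho}} + \oneNorm{\hat{\sigma} - \sigma} \ge \delta - \tau$, so one of the two tomography errors is at least $(\delta - \tau)/2$. Applying \Cref{lemma:PauliConcentration} with $\eps = (\delta-\tau)/2$ and a union bound yields $\beta(M_k) \le 12\,\Exp{-k(\delta-\tau)^2/216}$. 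Since $\tau = O(k^{-1/2}) \to 0$, expanding $(\delta-\tau)^2 = \delta^2 - 2\tau\delta + \tau^2$ shows the exponent equals $-k\delta^2/216$ up to an additive $O(k^{1/2})$ term, coming from $k\tau \propto \sqrt{k}$; this is exactly the correction absorbed by the $\lesssim$ in the statement.

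Each step is individually routine; the points to get right are the bookkeeping that produces the constant $216$ (the halving-then-squaring in both error analyses), and verifying that the single threshold $\tau \sim k^{-1/2}$ simultaneously holds the type I error below $\alpha$ while leaving the type II exponent asymptotically equal to $\delta^2/216$. The only genuine departure from the one-sample proof is that the concentration-failure budget must now be split over two independent estimates rather than one, and all rates must be expressed through the effective sample size $k = \min\{m,n\}$; the standing assumption that $m$ and $n$ grow at the same rate ensures $k$ faithfully captures the accuracy of both reconstructions.
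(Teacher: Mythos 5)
Your proposal is correct and follows essentially the same route as the paper: the paper proves a generic two-sample lemma (\Cref{lemma:TwoSamp_GenericConcentrationResult}) that thresholds $\oneNorm{\hat{\rho}-\hat{\sigma}}$ at $c_k \sim k^{-1/2}$, controls both error types via the same triangle-inequality/union-bound splitting, and then instantiates it with \Cref{lemma:PauliConcentration} (so $C = 1/54$, giving $4 \cdot 54 = 216$), exactly matching your halving-then-squaring bookkeeping. The only difference is presentational: the paper abstracts the argument over any concentration bound of the form $g(m)\Exp{-mCt^2}$ so it can reuse it for \Cref{theorem:QUHT_TwoSample_IndepMeas} and \Cref{theorem:QUHT_TwoSample_EntMeas}, whereas you carry out the same steps directly in the qubit case.
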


\subsection{Qudit Independent Measurements}
\label{subsection:QUHT_IndepMeas}

Unfortunately, extending the above result to higher-dimensional Hilbert spaces gives an error exponent that scales poorly with the dimension of the space. For example, any Hermitian operator on $b$ qubits ($d = 2^b$)
can be decomposed using $d^2 - 1$ components, from Pauli strings \cite{nielsen2010quantum}. Then applying the same method would give 
{\small
\begin{align}\label{equation:Construction_sigma}
    \Proba{\oneNorm{\sigma - \sh} \geq \eps} \leq 2(d^2 - 1) \Exp{-\frac{m\eps^2}{2(d^{2} - 1)^3}}.
\end{align}}

    
    

\ifarxiv
However, for general $d$-dimensional qudit states, we can do better using independent measurements by applying the following result from QST \cite{guctua2020fast}:
\begin{lemma} [Theorem 1 ~\cite{guctua2020fast}]
    
    \label{lemma:Tomography_IndepMeas}
    
    For a density operator $\tau \in \mathcal{D}(\mathcal{H})$, with $\dim(\mathcal{H}) = d$, and $\rank(\tau) = r$, it is possible to construct, via $m$ independent measurements, $\hat{\tau} \in \mathcal{D}(\mathcal{H})$ such that, for any $\delta \in (0,1)$,
    {\small
    \begin{equation}\label{equation:Tomography_IndepMeas_Weakened}
        \Proba{\oneNorm{\tau - \hat{\tau}} \geq \delta} \leq d \Exp{-\frac{m\delta^2}{86 r^2 d}}.
    \end{equation}
    }
\end{lemma}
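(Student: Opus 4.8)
The plan is to exhibit an explicit estimator $\hat{\tau}$ built from $m$ independent single-copy measurements and control its trace-norm deviation from $\tau$. Since the statement is Theorem~1 of \cite{guctua2020fast}, the most direct route is to invoke that theorem and weaken its (sharper but more intricate) prefactor and rate down to the clean form in \eqref{equation:Tomography_IndepMeas_Weakened}, absorbing lower-order terms into the universal constant. Below I sketch the self-contained argument underlying such a bound, since it makes transparent where the dimension factor $d$ and the rank factor $r^2$ enter.

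\textbf{Step 1 (unbiased linear-inversion estimator).} I would measure each of the $m$ copies independently with a single-copy POVM derived from a complex projective $2$-design, for instance the covariant/Haar measurement $\{\,d\,\ket{\psi}\bra{\psi}\,d\mu(\psi)\,\}$. The defining property of a $2$-design makes the associated frame super-operator a fixed multiple of an identity-plus-depolarizing map, so it can be inverted in closed form. Each outcome $\ket{\psi_\ell}$ then produces an i.i.d. Hermitian operator $X_\ell = (d+1)\ket{\psi_\ell}\bra{\psi_\ell} - I$ with $\mathbb{E}[X_\ell] = \tau$ and deterministic range $\|X_\ell\|_\infty = d$. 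Averaging gives the linear-inversion estimator $\hat{\tau}_{\mathrm{LS}} = \tfrac1m\sum_\ell X_\ell$, which is unbiased for $\tau$.

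\textbf{Step 2 (matrix concentration).} Writing $\hat{\tau}_{\mathrm{LS}} - \tau = \tfrac1m\sum_\ell (X_\ell - \tau)$ as a sum of centered, bounded, i.i.d. Hermitian matrices, I would apply the matrix Bernstein inequality. This requires the uniform bound $\|X_\ell - \tau\|_\infty = O(d)$ from Step~1 and the matrix variance statistic $\bigl\|\sum_\ell \mathbb{E}[(X_\ell - \tau)^2]\bigr\|_\infty$, which a short $2$-design computation (using $\mathbb{E}\ket{\psi}\bra{\psi} = (\tau+I)/(d+1)$) shows equals $m\bigl\|(d-1)\tau + dI - \tau^2\bigr\|_\infty = O(md)$. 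Bernstein then yields an operator-norm tail $\Pr[\|\hat{\tau}_{\mathrm{LS}} - \tau\|_\infty \geq t] \leq d\,\exp(-c\,mt^2/d)$, with the leading factor $d$ coming directly from the ambient dimension in the Bernstein prefactor.

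\textbf{Step 3 (projection and rank conversion).} To return a valid density operator I would project $\hat{\tau}_{\mathrm{LS}}$ onto the convex set $\mathcal{D}(\mathcal{H})$; because $\tau \in \mathcal{D}(\mathcal{H})$ and projection onto a convex set is nonexpansive, this does not increase the deviation beyond a universal constant. The final move exploits the rank hypothesis: splitting $\hat{\tau} - \tau$ according to the rank-$r$ support of $\tau$ and using the trace constraint $\Tr[\hat{\tau} - \tau] = 0$ together with positivity of $\hat{\tau}$ on the complement, one bounds $\|\hat{\tau} - \tau\|_1$ by a factor of order $r$ times the operator norm of the error. Substituting $t = \delta/O(r)$ into the Step-2 tail converts the operator-norm bound into $\Pr[\|\tau - \hat{\tau}\|_1 \geq \delta] \leq d\,\exp(-c'\,m\delta^2/(r^2 d))$, which is exactly \eqref{equation:Tomography_IndepMeas_Weakened}; carefully tracking the universal constants through Steps~2 and~3 pins down the displayed denominator. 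I expect \emph{this} step, jointly with the variance computation of Step~2, to be the main obstacle: one must compute the matrix variance statistic sharply enough to land the single power of $d$ (rather than $d^2$) in the denominator, and the low-rank trace-to-operator-norm inequality driven by the zero-trace constraint is precisely what produces the $r^2$ factor after squaring $\delta$, so ensuring the projection preserves concentration while respecting these exact constants is where the real care is needed.
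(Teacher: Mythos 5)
The paper does not actually prove this lemma: it is imported verbatim as Theorem~1 of Gu\c{t}\u{a} et al.~\cite{guctua2020fast}, and the paper's entire treatment is the citation plus the remark that the stated form specializes that result to $2$-design measurements (and, elsewhere, weakens $r$ to $d$). Your proposal therefore does strictly more than the paper. Your primary route --- invoke the cited theorem --- is exactly what the paper does, and it suffices. Your supplementary reconstruction is, moreover, faithful to how the cited reference actually proves the result: linear inversion of a $2$-design POVM giving the unbiased single-copy estimator $(d+1)\ket{\psi_\ell}\bra{\psi_\ell} - I$ with operator norm $d$; matrix Bernstein with per-sample variance $\left\|(d-1)\tau + dI - \tau^2\right\|_\infty = O(d)$ (your computation here is correct), giving the operator-norm tail $d\exp\left(-c\,mt^2/d\right)$; then projection onto $\mathcal{D}(\mathcal{H})$ and a rank-based conversion from operator norm to trace norm, which is indeed where the $r^2$ enters. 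This is the projected-least-squares argument of~\cite{guctua2020fast}.

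Two caveats on the reconstruction. First, your Step~3 appeal to nonexpansiveness of convex projection is imprecise: projection onto a convex set is nonexpansive in the Hilbert (Frobenius) norm, not in the operator or trace norm, and the statement you actually need --- that the trace-norm error of the projected estimator is bounded by $O(r)$ times the operator-norm error of the unprojected one --- is a specific structural lemma in~\cite{guctua2020fast} exploiting the rank-$r$ support of $\tau$, positivity, and the trace constraint, not a consequence of generic convexity. Second, the lemma carries an explicit constant ($86$), which you assert can be "pinned down" but never derive; a genuinely self-contained proof would have to track it. Neither caveat is fatal, since your declared fallback of quoting Theorem~1 of~\cite{guctua2020fast} is precisely the paper's own move and yields the statement as written.
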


Note that this is a special case of Guţă's \cite{guctua2020fast} more general result, where they take measurements that are $2$-designs \cite[Definition 1]{guctua2020fast}, and bounding the operators rank by $d$. This yields a concentration result that we can use in the same manner as before.

\else

However, for general $d$-dimensional qudit states, we can do better using independent measurements by applying the following result from QST \cite{guctua2020fast}:
\begin{lemma} [Theorem 1 ~\cite{guctua2020fast}]
    
    \label{lemma:Tomography_IndepMeas}
    
    For a density operator $\tau \in \mathcal{D}(\mathcal{H})$, with $\dim(\mathcal{H}) = d$, and $\rank(\tau) = r$, it is possible to construct, via $m$ independent measurements, $\hat{\tau} \in \mathcal{D}(\mathcal{H})$ such that, for any $\delta \in (0,1)$,
    {\small
    \begin{equation}\label{equation:error_exponent_tau}
        \Proba{\oneNorm{\tau - \hat{\tau}} \geq \delta} \leq d \Exp{-\frac{m\delta^2}{86 r^2d}}.
    \end{equation}
    }
\end{lemma}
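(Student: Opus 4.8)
Since this statement is, verbatim, Theorem~1 of \cite{guctua2020fast} specialized to a rank-one $2$-design as the measurement ensemble, the cleanest route is to invoke that result directly and record the specialization. For completeness, however, I would outline the mechanism by which such a bound arises, since the same template underlies every tomography-based exponent in \Cref{table:Summary_QUHT_Results}. The plan has three ingredients: a $2$-design measurement, a linear-inversion estimator, and a matrix-concentration step feeding a rank-based norm conversion.

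First I would fix the measurement: each of the $m$ copies of $\tau$ is measured independently with a single POVM $\{E_x\}$ whose elements are proportional to the rank-one projectors of a complex-projective $2$-design, so that the defining second-moment identity $\sum_x E_x\otimes E_x \propto \Pi_{\mathrm{sym}}$ holds, where $\Pi_{\mathrm{sym}}$ is the projector onto the symmetric subspace. This identity is exactly what makes the measurement superoperator invertible in closed form and pins down all the relevant second moments. Next I would build the \emph{linear-inversion} estimator $\hat\tau_{\mathrm{LI}} = \frac1m\sum_{i=1}^m Y_{x_i}$, where each single-shot term $Y_{x_i}$ is the (design-dependent, explicitly dimension-dependent) estimator satisfying $\mathbb{E}[Y_{x_i}]=\tau$; unbiasedness is a direct consequence of the $2$-design identity.

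The core of the argument is a matrix concentration step. The summands $Y_{x_i}-\tau$ are i.i.d., centered, and Hermitian, with a deterministic operator-norm bound of order $d$ and a matrix variance whose operator norm I would control using the $2$-design second-moment identity. Feeding these two quantities into the matrix Bernstein inequality yields an exponential tail for $\|\hat\tau_{\mathrm{LI}}-\tau\|_2$ in Hilbert--Schmidt norm, with exponent of order $m\delta^2/d$ up to the variance factor and a dimensional prefactor $d$ coming from the ambient matrix dimension. To reach the stated \emph{trace-norm} bound for a genuine density-matrix estimator, I would project $\hat\tau_{\mathrm{LI}}$ onto $\mathcal{D}(\mathcal{H})$ to obtain $\hat\tau$ (a contraction in Hilbert--Schmidt norm) and then apply the low-rank conversion $\|A\|_1 \le \sqrt{2\,\mathrm{rank}(A)}\,\|A\|_2$, exploiting that $\tau$ has rank $r$.

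The main obstacle --- and the reason the denominator carries $r^2 d$ rather than a bare $d$ --- is precisely this last conversion. The Frobenius-to-trace-norm inequality costs a factor $\sqrt{\mathrm{rank}}$, while the matrix-Bernstein variance term contributes its own rank/dimension factor; combining them, and being careful that only the rank-$r$ structure of $\tau$ (not of the full ambient estimator) may legitimately be used, is what produces the $r^2 d$ scaling together with the explicit constant $86$. The delicate bookkeeping is tracking the $2$-design second moments consistently through both the variance bound and the rank conversion. Finally, bounding $r\le d$ in the worst case recovers the $d^3$ denominator quoted for the general qudit rows of \Cref{table:Summary_QUHT_Results}.
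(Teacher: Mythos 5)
The paper does not prove this lemma at all: it is imported verbatim from Theorem~1 of \cite{guctua2020fast}, with only the remark that one specializes that result to $2$-design measurements (and later weakens $r \le d$), so your primary move---invoking the cited theorem directly and recording the specialization---is exactly the paper's approach. Your supplementary mechanism sketch is not load-bearing for correctness, but note that it deviates from the cited proof in its details: \cite{guctua2020fast} control the \emph{operator-norm} error of the linear-inversion estimator via the matrix Bernstein inequality and then pass to trace norm through a projection lemma costing a factor proportional to $r$, rather than Hilbert--Schmidt concentration followed by a $\sqrt{\mathrm{rank}}$ Frobenius conversion (which, as you yourself flag, cannot be applied naively since $\hat\tau - \tau$ need not be low-rank).
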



Note that this is a special case of Guţă's \cite{guctua2020fast} more general result, where they take measurements that are $2$-designs \cite[Definition 1]{guctua2020fast}. Additionally, since $r \leq d$, we can weaken the bound to the following:
\begin{equation}
    \label{equation:Tomography_IndepMeas_Weakened}
    \Proba{\oneNorm{\tau - \hat{\tau}} \geq \delta} \leq d\Exp{-\frac{m\delta^2}{86 \cdot d^3}}.
\end{equation}

This yields a concentration result that we can use in the same manner as before. 

\fi



\begin{theorem}[Independent One-Sample]
    \label{theorem:QUHT_OneSample_IndepMeas}
    In the QUHT one-sample setting with $\rho, \sigma \in \mathcal{D}(\mathcal{H})$, given $\sigma^{\otimes m}$ and $\alpha \in (0,1)$, using only independent measurements there exists a decision rule $M_m$, so that for $m$ large enough, we will have  
    { 
    \small
    \begin{align}
        \alpha(M_m) &\leq \alpha\\
        \beta(M_m) &\lesssim \Exp{-\frac{m \oneNorm{\rho - \sigma}^2}{86d^3}},
    \end{align}
    }

    where the $\lesssim$ hides an additive $O\left(m^{1/2}\right)$ term in the exponent. 
\end{theorem}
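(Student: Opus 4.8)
The plan is to follow the same template used for \Cref{theorem:Qubit_PauliMeas}, merely swapping the Pauli-based estimator for the general independent-measurement tomography scheme of \Cref{lemma:Tomography_IndepMeas}. First I would run that tomography procedure on the $m$ copies $\sigma^{\otimes m}$ to produce an estimate $\hat{\sigma}\in\mathcal{D}(\mathcal{H})$ obeying the rank-weakened concentration bound $\Proba{\oneNorm{\sigma - \hat{\sigma}} \geq \delta} \leq d\,\Exp{-\frac{m\delta^2}{86 d^3}}$. The decision rule $M_m$ is then a trace-distance threshold test: declare $H_0$ (i.e. $\sigma = \rho$) whenever $\oneNorm{\hat{\sigma} - \rho} \leq t_m$ and declare $H_1$ otherwise, where the threshold $t_m$ is fixed by the type I constraint.

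For the type I analysis, under $H_0$ we have $\rho = \sigma$, so a false rejection is exactly the event $\{\oneNorm{\hat{\sigma} - \sigma} > t_m\}$. Applying the concentration bound with $\delta = t_m$ gives $\alpha(M_m) \leq d\,\Exp{-\frac{m t_m^2}{86 d^3}}$, and setting this equal to $\alpha$ yields $t_m = \sqrt{\frac{86 d^3 \ln(d/\alpha)}{m}}$, which is $\Theta(m^{-1/2})$ and hence vanishes as $m \to \infty$. For the type II analysis, under $H_1$ a false acceptance is the event $\{\oneNorm{\hat{\sigma} - \rho} \leq t_m\}$. By the triangle inequality $\oneNorm{\hat{\sigma} - \sigma} \geq \oneNorm{\rho - \sigma} - \oneNorm{\hat{\sigma} - \rho} \geq \oneNorm{\rho - \sigma} - t_m$, so this event is contained in $\{\oneNorm{\hat{\sigma} - \sigma} \geq \oneNorm{\rho - \sigma} - t_m\}$. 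For $m$ large enough that $\oneNorm{\rho - \sigma} - t_m$ lies in $(0,1)$, the concentration bound with $\delta = \oneNorm{\rho - \sigma} - t_m$ applies and gives $\beta(M_m) \leq d\,\Exp{-\frac{m(\oneNorm{\rho - \sigma} - t_m)^2}{86 d^3}}$. Expanding, $m(\oneNorm{\rho - \sigma} - t_m)^2 = m\oneNorm{\rho - \sigma}^2 - 2 m t_m \oneNorm{\rho - \sigma} + m t_m^2$; since $t_m = \Theta(m^{-1/2})$ the cross term is $\Theta(m^{1/2})$ and the final term is $\Theta(1)$, leaving $\beta(M_m) \lesssim \Exp{-\frac{m\oneNorm{\rho - \sigma}^2}{86 d^3}}$ with an additive $O(m^{1/2})$ slack in the exponent, exactly as claimed (the leading $d$ prefactor is harmless since $\ln d = O(1)$).

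The main obstacle I anticipate is bookkeeping around the regime of validity rather than any deep difficulty. Two points need care: ensuring that $\delta = \oneNorm{\rho - \sigma} - t_m$ falls in the admissible range $(0,1)$ demanded by \Cref{lemma:Tomography_IndepMeas} — immediate once $\oneNorm{\rho - \sigma} < 1$ and $m$ is large, but the regime $\oneNorm{\rho - \sigma} \geq 1$ requires either extending the tail bound to $\delta \geq 1$ or handling it by a separate argument — and tracking the cross term precisely so that it is the advertised $O(m^{1/2})$ correction, and nothing larger, that is absorbed into the $\lesssim$.
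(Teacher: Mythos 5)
Your proposal is correct and follows essentially the same route as the paper: the paper proves a generic one-sample lemma (\Cref{lemma:OneSamp_GenericConcentrationResult}) — a trace-distance threshold test whose type I error is controlled by the concentration bound and whose type II error follows from the triangle inequality plus the same bound — and then obtains \Cref{theorem:QUHT_OneSample_IndepMeas} as a direct corollary by plugging in \Cref{lemma:Tomography_IndepMeas} with $g(m)=d$ and $C = 1/(86d^3)$, which is exactly your argument. Your added caveat about the admissible range $\delta \in (0,1)$ when $\oneNorm{\rho-\sigma} - t_m \geq 1$ is a legitimate bookkeeping point that the paper's generic lemma glosses over, so flagging it is a small improvement rather than a deviation.
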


This also yields a result for the two-sample QUHT setting.

\begin{theorem}[Independent Two-Sample]
    \label{theorem:QUHT_TwoSample_IndepMeas}
    In the QUHT two-sample setting with $\rho, \sigma \in \mathcal{D}(\mathcal{H})$, suppose $\sigma^{\otimes m}$, $\rho^{\otimes n}$, $\alpha \in (0,1)$ were given, and take $k = \min\{m, n\}$. For large enough $m, n$, and using independent measurements, there exists a decision rule $M_m$ such that 
    {\small
    \begin{align}
        \alpha(M_k) &\leq \alpha\\
        \beta(M_k) &\lesssim \Exp{-\frac{k \oneNorm{\rho - \sigma}^2}{344d^3}},
    \end{align}
    }
    where $\lesssim$ is hiding an additive $O(k^{1/2})$ term in the exponent. 
\end{theorem}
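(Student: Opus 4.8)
The plan is to mirror the one-sample construction of \Cref{theorem:QUHT_OneSample_IndepMeas}, applying the tomography concentration of \Cref{lemma:Tomography_IndepMeas} to \emph{both} unknown states. First I would run independent-measurement tomography on the $n$ copies of $\rho$ to obtain an estimate $\hat\rho$, and separately on the $m$ copies of $\sigma$ to obtain $\hat\sigma$. The decision rule $M_k$ is then a threshold test on the empirical trace distance: accept $H_0$ when $\oneNorm{\hat\rho-\hat\sigma}\le\eta$ and reject otherwise, where the threshold $\eta$ is chosen as a function of $k$, $d$, and $\alpha$. Since $k=\min\{m,n\}$, both estimates inherit a concentration rate governed by $k$ via the weakened bound in \eqref{equation:Tomography_IndepMeas_Weakened}.

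For the type I error I would use the forward triangle inequality. Under $H_0$ we have $\rho=\sigma$, so $\oneNorm{\hat\rho-\hat\sigma}\le\oneNorm{\hat\rho-\rho}+\oneNorm{\sigma-\hat\sigma}$, and a false rejection forces at least one of the two estimation errors to exceed $\eta/2$. A union bound over these two events, each controlled by \Cref{lemma:Tomography_IndepMeas}, gives $\alpha(M_k)\le 2d\,\Exp{-k(\eta/2)^2/(86 d^3)}$. Setting this equal to $\alpha$ and solving yields $\eta=\Theta\!\big(\sqrt{d^3\ln(d/\alpha)/k}\big)=O(k^{-1/2})$, which meets the type I constraint for $k$ large enough.

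For the type II error I would use the reverse triangle inequality. When $\rho\ne\sigma$, write $\Delta=\oneNorm{\rho-\sigma}$; then $\oneNorm{\hat\rho-\hat\sigma}\ge\Delta-\oneNorm{\hat\rho-\rho}-\oneNorm{\sigma-\hat\sigma}$, so a false acceptance ($\oneNorm{\hat\rho-\hat\sigma}\le\eta$) forces $\oneNorm{\hat\rho-\rho}+\oneNorm{\sigma-\hat\sigma}\ge\Delta-\eta$, and hence one of the two estimation errors to exceed $(\Delta-\eta)/2$. Applying \Cref{lemma:Tomography_IndepMeas} and a union bound once more gives $\beta(M_k)\le 2d\,\Exp{-k\,((\Delta-\eta)/2)^2/(86 d^3)}$. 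The claimed denominator $344\,d^3=4\cdot 86\,d^3$ comes precisely from the factor $(\Delta/2)^2=\Delta^2/4$ created by splitting the separation $\Delta$ across the two independent estimates, which is the only structural difference from the one-sample exponent $86\,d^3$.

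Finally I would expand $((\Delta-\eta)/2)^2=\Delta^2/4-\Delta\eta/2+\eta^2/4$ and absorb the lower-order terms. Since $\eta=O(k^{-1/2})$, the cross term contributes $\tfrac{k}{86 d^3}\cdot\tfrac{\Delta\eta}{2}=O(k^{1/2})$ to the exponent, which is exactly the additive $O(k^{1/2})$ slack hidden by the $\lesssim$ in the statement. The main obstacle is really just the bookkeeping of this $O(k^{1/2})$ correction: I must verify that the threshold $\eta$ needed to honor the \emph{fixed} type I constraint shrinks only at rate $k^{-1/2}$, so that it perturbs the type II exponent at order $k^{1/2}$ and leaves the leading rate $k\Delta^2/(344 d^3)$ intact. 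Everything else is a direct two-sided application of the single concentration inequality already established, together with the choice $k=\min\{m,n\}$ to express both bounds through the effective sample size.
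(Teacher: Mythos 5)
Your proposal is correct and follows essentially the same route as the paper: the paper proves a generic two-sample lemma (threshold test on $\oneNorm{\hat\rho-\hat\sigma}$, triangle inequality plus union bound for type I, reverse triangle inequality plus union bound for type II, with the factor of $4$ from splitting the separation across the two estimates) and then instantiates it with \Cref{lemma:Tomography_IndepMeas}, which is exactly your argument, including the $O(k^{1/2})$ bookkeeping from the shrinking threshold. The only cosmetic difference is that the paper abstracts the concentration bound as $g(m)\Exp{-mCt^2}$ so the same lemma also yields the qubit and entangled-measurement two-sample theorems, whereas you work directly with the $d\Exp{-m\delta^2/(86d^3)}$ bound.
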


\subsection{Entangled Measurements}


 While independent measurements are easier to implement, it has been shown that entangled measurements achieve a sample complexity that is strictly better than that achievable by independent measurements \cite{haah2016sample}. 

\begin{lemma}[Section 2.3~\cite{haah2016sample}]
    \label{lemma:Tomography_EntMeas}
    For a density operator $\tau \in \mathcal{D}(\mathcal{H})$, with $\dim(\mathcal{H}) = d$, and $\rank(\tau) = r$, it is possible to construct a density operator $\hat{\tau} \in \mathcal{D}(\mathcal{H})$ using $\tau^{\otimes m}$, such that for all $\delta \in (0,1)$, 
    \begin{equation}\label{equation:fidelity_concentration}
        \Proba{F(\tau, \hat{\tau}) \leq 1 - \delta} \leq (m+1)^{3rd} \Exp{-2m\delta}.
    \end{equation}
\end{lemma}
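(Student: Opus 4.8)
The plan is to reconstruct the entangled-measurement estimator of Keyl, as analyzed in Haah et al.~\cite{haah2016sample}, and to extract the stated tail bound from the Schur--Weyl representation theory that underlies it. The estimator arises from a covariant measurement on $\tau^{\otimes m}$ built in two stages. First one performs \emph{weak Schur sampling}: using the decomposition $\mathcal{H}^{\otimes m}\cong\bigoplus_{\lambda\vdash m}Q_\lambda\otimes P_\lambda$ into irreducibles of $U(d)\times S_m$, one projectively measures the Young diagram $\lambda$, whose rescaling $\bar\lambda=\lambda/m$ estimates the spectrum of $\tau$. Second, a covariant POVM on the $U(d)$-factor $Q_\lambda$ returns a unitary $U$ estimating the eigenbasis, and the reported estimate is $\hat\tau=U\,\mathrm{diag}(\bar\lambda)\,U^{\dagger}$. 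The first step is to record the joint outcome law: $\lambda$ appears with probability $\dim(P_\lambda)\,s_\lambda(\mathrm{spec}\,\tau)$ (the Schur--Weyl distribution, with $s_\lambda$ the Schur polynomial), and the basis outcome has a density given by a squared matrix coefficient $|\langle\,\cdot\,\rangle|^{2}$ of the irrep $Q_\lambda$.

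Next I would reduce the infidelity event $\{F(\tau,\hat\tau)\le 1-\delta\}$ to a large-deviation event for this joint law, decomposing the infidelity into a \emph{spectral} contribution (deviation of $\bar\lambda$ from $\mathrm{spec}\,\tau$) and a \emph{basis-alignment} contribution (misalignment of $U$ with the true eigenbasis). The exponential decay then falls out of the structure of the outcome density. In the cleanest (pure-state, $r=1$) case the covariant POVM has density proportional to $\dim(\mathrm{Sym}^m\mathcal{H})\,|\langle\phi|\psi\rangle|^{2m}$, so integrating over the low-overlap region bounds the tail by $\dim(\mathrm{Sym}^m\mathcal{H})$ times a factor decaying like $e^{-2m\delta}$; the power $2m$ carried by the matrix-coefficient weight is what produces the $2$ in the exponent, and the symmetric-subspace dimension $\le(m+1)^{O(d)}$ supplies the prefactor. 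The general rank-$r$ case replaces this single integral with the product of the Schur--Weyl weight and the $Q_\lambda$-matrix-coefficient weight, whose combined rate is again $2m\delta$ via Keyl's large-deviation estimate.

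For the polynomial prefactor I would collect three sources of $(m+1)$-powers: the number of Young diagrams that contribute, which is $\le(m+1)^{r}$ once one restricts to the $\le r$ rows relevant to a rank-$r$ state; the dimension of the participating $U(d)$-irreps $Q_\lambda$; and the covering of the covariant basis measurement, equivalently a discretization of the Stiefel manifold of $r$-frames in $\mathbb{C}^{d}$. Bounding each factor by a power of $(m+1)$ and multiplying yields the claimed $(m+1)^{3rd}$.

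The step I expect to be the main obstacle is the basis-alignment estimate in the mixed-state case: one must bound high-dimensional Wigner-type matrix coefficients of $Q_\lambda$ and show that they concentrate the estimated eigenbasis near the truth at the \emph{same} exponential rate $2m\delta$ as the spectral estimate, without leaking a dimension factor into the exponent. This representation-theoretic concentration — and the fact that entangled measurements keep the rate dimension-free, in contrast to the $d^{3}$ loss of the independent-measurement bound in \Cref{lemma:Tomography_IndepMeas} — is the delicate heart of the argument.
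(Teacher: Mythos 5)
This lemma is not proved in the paper at all: it is imported verbatim (up to notation) from Section~2.3 of Haah, Harrow, Ji, Wu and Yu~\cite{haah2016sample}, and the paper's only original manipulations of it are the weakening $r \le d$ in \Cref{equation:Tomography_EntMeas_Weakened} and the conversion to a trace-norm bound in \Cref{corollary:Tomography_EntMeas_TraceNorm} (via \Cref{lemma:FidelityTraceNormRelation}). So the benchmark here is a citation, and the correct ``proof'' at the level of this paper is simply to invoke that reference. Your decision to re-derive the Keyl-measurement analysis from Schur--Weyl theory is therefore far more ambitious than what the paper does, and your outline does match the actual structure of the argument in \cite{haah2016sample}: weak Schur sampling of $\lambda$ under the Schur--Weyl distribution, a covariant measurement on the $U(d)$-factor $Q_\lambda$ for the eigenbasis, and the estimate $\hat\tau = U\,\mathrm{diag}(\lambda/m)\,U^\dagger$.

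As a proof, however, the proposal has genuine gaps. First, the crucial step --- showing that the joint spectral and basis-alignment deviations concentrate at rate exactly $2m\delta$ with only a $(m+1)^{3rd}$ prefactor --- is not carried out; you explicitly flag it as ``the main obstacle'' and ``the delicate heart of the argument,'' but that step \emph{is} the content of Keyl's and Haah et al.'s theorem, so deferring it means the claim is never established. Second, the one computation you do sketch (the pure-state case) does not produce the claimed exponent: with the paper's convention $F(\rho,\sigma)=\|\sqrt{\rho}\,\sqrt{\sigma}\|_1^2$, the overlap weight $|\langle\phi|\psi\rangle|^{2m}$ equals $F^m$, so integrating over the region $\{F \le 1-\delta\}$ gives a bound of order $(1-\delta)^m \le e^{-m\delta}$ --- exponent $m\delta$, not $2m\delta$. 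Hence your stated mechanism for where the factor $2$ comes from is unsupported, and the rank-$r$ extension inherits the same deficit. The prefactor bookkeeping (three sources of $(m+1)$ powers multiplying to $(m+1)^{3rd}$) is plausible but never pinned down. If a self-contained proof is intended, these steps must be filled in from \cite{haah2016sample} or Keyl's large-deviation theorem; if the intent is to match the paper, a citation suffices.
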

We further weaken this bound to the following:
\begin{equation}
    \label{equation:Tomography_EntMeas_Weakened}
    \Proba{F(\tau, \hat{\tau}) \leq 1 - \delta} \leq (m+1)^{3d^2}\Exp{-2m\delta}.
\end{equation}

This concentration result allows us to establish tighter bounds on the type II error rate. Specifically, we have the following one-sample result:

\begin{theorem}[Entangled One-Sample]
    \label{theorem:QUHT_OneSample_EntMeas}

    In the QUHT one-sample setting with $\rho, \sigma \in \mathcal{D}(\mathcal{H})$, given $\sigma^{\otimes m}$, $\alpha \in (0,1)$, and allowing measurements involving entanglement, there exists a decision rule $M_m$ such that for $m$ large enough, 
    {\small
    \begin{align}\label{equation:onesample_setting}
        \alpha(M_m) &\leq \alpha\\
        \beta(M_m) &\lesssim \Exp{-\frac{m\oneNorm{\rho - \sigma}^2}{2}},
    \end{align}
    }

    where the $\lesssim$ hides an additive $O\left(m^{1/2}\right)$ term in the exponent. 
\end{theorem}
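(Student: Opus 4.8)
The plan is to convert the fidelity concentration bound of Lemma~\ref{lemma:Tomography_EntMeas} into a trace-distance statement, and then use it to calibrate a decision rule that thresholds on the estimated trace distance $T(\rho,\hat\sigma)$. First I would recall the Fuchs--van de Graaf inequalities, which relate fidelity and trace distance via
\begin{equation}
    1 - \sqrt{F(\tau,\hat\tau)} \;\leq\; T(\tau,\hat\tau) \;\leq\; \sqrt{1 - F(\tau,\hat\tau)}.
\end{equation}
The upper bound is the one I want: if $F(\sigma,\hat\sigma) \geq 1 - \delta$, then $T(\sigma,\hat\sigma) \leq \sqrt{\delta}$, i.e. $\oneNorm{\sigma-\hat\sigma} \leq 2\sqrt{\delta}$. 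Substituting $\delta = (\eps/2)^2$ into the weakened bound~\eqref{equation:Tomography_EntMeas_Weakened} turns the fidelity tail into a trace-norm tail:
\begin{equation}
    \Proba{\oneNorm{\sigma - \hat\sigma} \geq \eps} \;\leq\; (m+1)^{3d^2}\,\Exp{-\tfrac{m\eps^2}{2}}.
\end{equation}
This puts the entangled-measurement estimator on the same footing as the independent-measurement concentration results used in the earlier theorems, with the key improvement that the exponent carries a constant factor of $1/2$ rather than a $d^3$ penalty.

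Next I would define the decision rule exactly as in the independent-measurement one-sample proofs: run tomography on $\sigma^{\otimes m}$ to obtain $\hat\sigma$, fix a radius $\eps_m$, and declare $\hat H = H_0$ iff $\oneNorm{\rho - \hat\sigma} \leq \eps_m$ (equivalently, $\rho$ lies in the confidence ball around $\hat\sigma$). To control the type~I error I would set $\eps_m$ so that the right-hand side above equals $\alpha$: solving $(m+1)^{3d^2}\Exp{-m\eps_m^2/2} = \alpha$ gives
\begin{equation}
    \eps_m = \sqrt{\tfrac{2}{m}\left(3d^2\ln(m+1) + \ln\tfrac1\alpha\right)},
\end{equation}
so that under $H_0$ (where $\sigma=\rho$), the event $\oneNorm{\rho-\hat\sigma}>\eps_m$ has probability at most $\alpha$, yielding $\alpha(M_m)\leq\alpha$. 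Crucially $\eps_m = O\!\bigl(\sqrt{(d^2\ln m)/m}\,\bigr)\to 0$, so the threshold shrinks as the sample size grows.

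For the type~II error, assume $H_1$ holds so the true state is $\sigma \neq \rho$. A false acceptance requires $\oneNorm{\rho-\hat\sigma}\leq\eps_m$, and by the triangle inequality this forces the estimator to be far from the truth: $\oneNorm{\sigma-\hat\sigma} \geq \oneNorm{\rho-\sigma} - \oneNorm{\rho-\hat\sigma} \geq \oneNorm{\rho-\sigma} - \eps_m$. Applying the trace-norm concentration bound with $\eps = \oneNorm{\rho-\sigma}-\eps_m$ (positive once $m$ is large enough that $\eps_m < \oneNorm{\rho-\sigma}$) gives
\begin{equation}
    \beta(M_m) \;\leq\; (m+1)^{3d^2}\,\Exp{-\tfrac{m(\oneNorm{\rho-\sigma}-\eps_m)^2}{2}}.
\end{equation}
Expanding the square, the leading term is $-m\oneNorm{\rho-\sigma}^2/2$, the polynomial prefactor contributes $3d^2\ln(m+1) = o(m)$ to the exponent, and the cross term is $m\cdot\oneNorm{\rho-\sigma}\cdot\eps_m = O(\sqrt{m\cdot d^2\ln m}) = O(m^{1/2})$ up to logarithmic factors, matching the claimed additive $O(m^{1/2})$ error in the exponent. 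The main obstacle is purely bookkeeping: carefully tracking how the $\eps_m$ threshold and the logarithmic prefactor combine so that everything beyond the dominant $-m\oneNorm{\rho-\sigma}^2/2$ term is genuinely absorbed into the $O(m^{1/2})$ slack, and confirming that the Fuchs--van de Graaf conversion does not lose the favorable constant $1/2$ that distinguishes this bound from the $d^3$-penalized independent-measurement case.
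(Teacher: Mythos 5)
Your proposal is correct and follows essentially the same route as the paper: the paper first converts Lemma~\ref{lemma:Tomography_EntMeas} into a trace-norm concentration bound $(m+1)^{3d^2}\Exp{-mt^2/2}$ via the fidelity--trace-distance inequality (\Cref{corollary:Tomography_EntMeas_TraceNorm}), then feeds it into a generic one-sample argument (\Cref{lemma:OneSamp_GenericConcentrationResult}) that uses exactly your confidence-ball decision rule, type~I calibration of the threshold, and triangle-inequality bound on the type~II error. The only difference is organizational --- the paper factors the thresholding argument into a reusable lemma shared by all the one-sample theorems, while you inline it --- and your explicit caveat about the $\sqrt{m\ln m}$ cross term matches the same (mild) notational looseness in the paper's own $O(m^{1/2})$ claim.
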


We can also use this to get results for two-sample QUHT. 
In particular, using entangled measurements, we can achieve the following:

\begin{theorem}[Entangled Two-Sample]
    \label{theorem:QUHT_TwoSample_EntMeas}

    In the QUHT two-sample setting with $\rho, \sigma \in \mathcal{D}(\mathcal{H})$, suppose we are given $\sigma^{\otimes m}$, $\rho^{\otimes n}$, $\alpha \in (0,1)$, and take $k = \min\{m, n\}$. For large enough $m, n$, and using entangled measurements, there exists a decision rule $M_m$ such that 
    {
    \begin{align}\label{equation:entangled_setting}
        \alpha(M_m) &\leq \alpha\\
        \beta(M_m) &\lesssim \Exp{-\frac{k \oneNorm{\rho - \sigma}^2}{8}},
    \end{align}
    }
    where $\lesssim$ hides an additive $O(k^{1/2})$ term in the exponent. 
\end{theorem}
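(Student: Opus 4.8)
The plan is to mirror the one-sample entangled construction of \Cref{theorem:QUHT_OneSample_EntMeas}, now running the tomography procedure of \Cref{lemma:Tomography_EntMeas} on \emph{both} unknown states. Concretely, I would use the $m$ copies $\sigma^{\otimes m}$ to produce an estimate $\hat\sigma$ and the $n$ copies $\rho^{\otimes n}$ to produce an estimate $\hat\rho$, and declare $H_0$ precisely when $\oneNorm{\hat\sigma - \hat\rho} \le r$ for a confidence radius $r$ to be fixed later. Since \Cref{lemma:Tomography_EntMeas} controls the fidelity rather than the trace distance, the first step is to translate the weakened bound \Cref{equation:Tomography_EntMeas_Weakened} into a trace-distance statement via the Fuchs--van de Graaf inequalities $1 - \sqrt{F(\tau,\hat\tau)} \le T(\tau,\hat\tau) \le \sqrt{1 - F(\tau,\hat\tau)}$. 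The upper bound yields the inclusion $\{\oneNorm{\tau - \hat\tau} \ge t\} \subseteq \{F(\tau,\hat\tau) \le 1 - t^2/4\}$, so that $\Proba{\oneNorm{\tau-\hat\tau}\ge t} \le (m+1)^{3d^2}\Exp{-mt^2/2}$, exactly as in the one-sample case.

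For the Type I analysis, under $H_0$ we have $\sigma = \rho$, so the triangle inequality gives $\oneNorm{\hat\sigma - \hat\rho} \le \oneNorm{\hat\sigma - \sigma} + \oneNorm{\rho - \hat\rho}$. I would allocate a radius $r/2$ to each estimate and choose it so that each estimator leaves its confidence ball with probability at most $\alpha/2$; by the trace-distance bound above this forces $r = O(\sqrt{\ln k / k})$, and a union bound gives $\alpha(M_k) \le \alpha$. The key point is that $r \to 0$ as $k \to \infty$, so it only perturbs the final exponent at lower order.

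For the Type II analysis, under $H_1$ a false acceptance is the event $\oneNorm{\hat\sigma - \hat\rho} \le r$. Applying the triangle inequality the other way, $\oneNorm{\rho - \sigma} \le \oneNorm{\rho - \hat\rho} + \oneNorm{\hat\rho - \hat\sigma} + \oneNorm{\hat\sigma - \sigma}$, so on this event $\oneNorm{\rho - \hat\rho} + \oneNorm{\hat\sigma - \sigma} \ge \oneNorm{\rho - \sigma} - r$. Hence at least one of the two estimation errors must exceed $s := (\oneNorm{\rho-\sigma} - r)/2$, and a union bound gives $\beta(M_k) \le \Proba{\oneNorm{\hat\sigma-\sigma}\ge s} + \Proba{\oneNorm{\hat\rho-\rho}\ge s}$. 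Plugging each probability into the trace-distance concentration bound produces terms $(m+1)^{3d^2}\Exp{-ms^2/2}$ and $(n+1)^{3d^2}\Exp{-ns^2/2}$, both dominated by the effective sample size $k = \min\{m,n\}$, so that $\beta(M_k) \lesssim \Exp{-ks^2/2}$. Since $r\to 0$ we have $s^2 \to \oneNorm{\rho-\sigma}^2/4$, whence $ks^2/2 \to k\oneNorm{\rho-\sigma}^2/8$, which is the claimed exponent; the factor of $8$ (versus $2$ in the one-sample case) is precisely the cost of splitting the trace-distance gap equally between the two estimators.

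The main obstacle I anticipate is bookkeeping the additive $O(k^{1/2})$ slack rather than the leading-order behaviour. Expanding $ks^2/2 = k(\oneNorm{\rho-\sigma} - r)^2/8$ produces a cross term of order $k\,\oneNorm{\rho-\sigma}\,r$, and since $r = O(\sqrt{\ln k/k})$ this is $O(\sqrt{k\ln k})$; together with the $3d^2\ln(m+1)$ contribution from the polynomial prefactor, these terms must be collected into the $\lesssim$ notation. One also has to make the ``same rate'' assumption on $m,n$ explicit so that $k=\min\{m,n\}$ genuinely governs both tail bounds and both prefactors remain subexponential in $k$ simultaneously; this is where the stated regime of $m,n\to\infty$ at a common rate is used.
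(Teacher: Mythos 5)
Your proposal is correct and takes essentially the same route as the paper: the paper first converts the fidelity concentration of \Cref{lemma:Tomography_EntMeas} into the trace-norm tail bound (its \Cref{corollary:Tomography_EntMeas_TraceNorm}, identical to your Fuchs--van de Graaf step) and then invokes a generic two-sample lemma (\Cref{lemma:TwoSamp_GenericConcentrationResult}) whose proof is exactly your construction -- tomography on both samples, acceptance when $\|\hat{\sigma}-\hat{\rho}\|_1$ is below a vanishing threshold, triangle-inequality plus union-bound analyses for both error types, and the halving of the gap that produces the $k\|\rho-\sigma\|_1^2/8$ exponent. Your bookkeeping remarks (the $O(\sqrt{k\ln k})$ cross term and the bounded-ratio assumption on $m,n$) match the paper's treatment, where the same assumption $\limsup n/m < \infty$ is imposed so that the polynomial prefactors stay subexponential in $k$.
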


\section{Discussion}
\label{section:Discussion}

Recall that in the classical universal setting, Hoeffding was able to characterize the error exponent in terms of the KLD between the nominal and true distributions. In this work, we have established positive error exponents not in terms of the KLD, but the trace distance between the nominal and true distributions. Specifically, we showed various results of the following form: 

\begin{equation}
    \liminf_{n \to \infty}- \frac{1}{m} \ln(\beta(M_m))  \geq \frac{\oneNorm{\rho - \sigma}^2}{c},
\end{equation}
where $c \in \R$ depends on the dimension of the underlying Hilbert space, and the types of measurements used. This general form applies to both the one-sample and two-sample settings, with the latter offering greater flexibility in scenarios where the nominal state is itself inaccessible or generated from experimental data.


Comparing with the classical results for the universal hypothesis testing problem, which achieves an error exponent in the form of the relative entropy between $\rho$ and $\sigma$, this result appears somewhat lacking. For example, we have quantum Pinsker's inequality \cite[Theorem 3.1]{hiai1981sufficiency}, which establishes that the squared trace distance forms a lower bound for the relative entropy. 
Since Hoeffding establishes his result in the classical setting using a large deviation type bound, a first attempt at a quantum generalization might utilize quantum variations of the classical Sanov's theorem \cite[Theorem 11.4.1]{EInfo06}. Unfortunately, the results in \cite{bjelakovic2005quantum, notzel2014hypothesis} are insufficient for recovering a bound in the QUHT setting, as they require knowledge of the alternative hypothesis (the true state $\sigma$ in this work). 

As a corollary to \cite[Lemma 1, Corollary 6]{hayashi2024another}, one could get a decision rule $\hat{M}_{m}$ such that 
{ 
\begin{align}
    \alpha(\hat{M}_m) &\leq \alpha\\
    \liminf_{m \to \infty} -\frac{1}{m} \ln(\beta(\hat{M}_m)) &\geq - \hat{D}(\rho, \sigma),
\end{align}
}

where $\hat{D}(\rho\| \sigma)$ is defined as
{ 
\begin{equation}
    \hat{D}(\rho \| \sigma) = \lim_{s \to 0} \left\{\frac{-\ln\left(\Tr\left[ \left(\sigma^{\frac{1-s}{2s}} \rho \sigma^{\frac{1-s}{2s}}\right)^s \right]\right)}{s}\right\}.
\end{equation}
}

\ifarxiv
We briefly show the result in \Cref{appendix:AlternativeUsingHiyashiResult}. 
\fi
However, \cite[Lemma 3]{hayashi2024another} showed that $\hat{D}(\rho \| \sigma) \leq D(\rho\| \sigma)$, and tight relationships between $\hat{D}(\rho \| \sigma)$ and the trace norm are not  known, which makes direct comparison difficult. Nevertheless, the quantum empirical distribution they define is likely worth considering for establishing tighter bounds for QUHT.

If we could find a decision rule $M_m^*$ with
\begin{equation}
    \lim_{m \to \infty} \left\{\frac{1}{m} \ln(\beta(M_m^*))\right\} = -D(\rho\| \sigma),
\end{equation}
then applying 
Quantum Pinsker's Inequality, which is tight up to the constant \footnote{When $\rho, \sigma$ commute, the inequality reduces to the classical version, which is known to be tight to the constant \cite{csiszar2011information}},
we would have $\beta(M^*_m) \simeq e^{-mD(\rho, \sigma)} \leq e^{-\frac{m \oneNorm{\rho - \sigma}^2}{2}}$. However, in \Cref{theorem:QUHT_OneSample_EntMeas}, we showed that there exists a decision rule $M_m$ such that $\beta(M_m) \simeq e^{-\frac{m \oneNorm{\rho - \sigma}^2}{2}}$. In other words, even if a decision rule achieving an error exponent of $D(\rho \| \sigma)$ were available, its implied performance in terms of trace distance would be upper-bounded by a term matching the exponent achieved using tomography. 
\ifarxiv
We corroborate this in \Cref{appendix:optimalConstant} where we show that any result achieving a constant type I error rate, and a type II error exponent of $C\oneNorm{\rho - \sigma}^2$ with $C > \frac{1}{2}$ would violate the converse result of \textit{classical Stein's lemma}. 

\fi

Several other works have considered similar problems under different names in other areas of quantum information science \cite{long2024equivalence} \cite{burgholzer2020advanced} \cite{huang2024certifying} \cite{yu2022statistical}, often motivated by practical needs for verifying quantum devices or certifying quantum behavior in minimally structured settings. By framing QUHT as a formal hypothesis testing problem, we hope that it will provide both conceptual clarity and a foundation for future work. While our focus has been on leveraging QST to achieve provable error exponents, the broader framework of one- and two-sample QUHT may have applications analogous to those of classical universal hypothesis testing.




\ifarxiv
\else
\clearpage
\fi
\bibliographystyle{IEEEtran}
\bibliography{IEEEabrv,refs.bib}

\begin{thebibliography}{10}
\providecommand{\url}[1]{#1}
\csname url@samestyle\endcsname
\providecommand{\newblock}{\relax}
\providecommand{\bibinfo}[2]{#2}
\providecommand{\BIBentrySTDinterwordspacing}{\spaceskip=0pt\relax}
\providecommand{\BIBentryALTinterwordstretchfactor}{4}
\providecommand{\BIBentryALTinterwordspacing}{\spaceskip=\fontdimen2\font plus
\BIBentryALTinterwordstretchfactor\fontdimen3\font minus \fontdimen4\font\relax}
\providecommand{\BIBforeignlanguage}[2]{{%
\expandafter\ifx\csname l@#1\endcsname\relax
\typeout{** WARNING: IEEEtran.bst: No hyphenation pattern has been}%
\typeout{** loaded for the language `#1'. Using the pattern for}%
\typeout{** the default language instead.}%
\else
\language=\csname l@#1\endcsname
\fi
#2}}
\providecommand{\BIBdecl}{\relax}
\BIBdecl

\bibitem{HELSTROM1967254}
C.~W. Helstrom, ``Detection theory and quantum mechanics,'' \emph{Information and Control}, vol.~10, no.~3, pp. 254--291, 1967.

\bibitem{Helstrom1969}
\BIBentryALTinterwordspacing
------, ``Quantum detection and estimation theory,'' \emph{Journal of Statistical Physics}, vol.~1, no.~2, pp. 231--252, Jun 1969. [Online]. Available: \url{https://doi.org/10.1007/BF01007479}
\BIBentrySTDinterwordspacing

\bibitem{audenaert2007discriminating}
K.~M. Audenaert, J.~Calsamiglia, R.~Munoz-Tapia, E.~Bagan, L.~Masanes, A.~Acin, and F.~Verstraete, ``Discriminating states: The quantum chernoff bound,'' \emph{Physical review letters}, vol.~98, no.~16, p. 160501, 2007.

\bibitem{hiai1991proper}
F.~Hiai and D.~Petz, ``The proper formula for relative entropy and its asymptotics in quantum probability,'' \emph{Communications in mathematical physics}, vol. 143, pp. 99--114, 1991.

\bibitem{ogawa2000strong}
T.~Ogawa and H.~Nagaoka, ``Strong converse and stein's lemma in quantum hypothesis testing,'' \emph{IEEE Transactions on Information Theory}, vol.~46, no.~7, pp. 2428--2433, 2000.

\bibitem{EInfo06}
T.~M. Cover and J.~A. Thomas, \emph{Elements of Information Theory (Wiley Series in Telecommunications and Signal Processing)}.\hskip 1em plus 0.5em minus 0.4em\relax USA: Wiley-Interscience, 2006.

\bibitem{CIT-004}
\BIBentryALTinterwordspacing
I.~Csiszár and P.~Shields, ``Information theory and statistics: A tutorial,'' \emph{Foundations and Trends® in Communications and Information Theory}, vol.~1, no.~4, pp. 417--528, 2004. [Online]. Available: \url{http://dx.doi.org/10.1561/0100000004}
\BIBentrySTDinterwordspacing

\bibitem{Romano2012}
\BIBentryALTinterwordspacing
J.~P. Romano, \emph{On the Third Edition of Testing Statistical Hypotheses}.\hskip 1em plus 0.5em minus 0.4em\relax Boston, MA: Springer US, 2012, pp. 1093--1096. [Online]. Available: \url{https://doi.org/10.1007/978-1-4614-1412-4_94}
\BIBentrySTDinterwordspacing

\bibitem{Pearson01071900}
\BIBentryALTinterwordspacing
K.~P. and, ``X. on the criterion that a given system of deviations from the probable in the case of a correlated system of variables is such that it can be reasonably supposed to have arisen from random sampling,'' \emph{The London, Edinburgh, and Dublin Philosophical Magazine and Journal of Science}, vol.~50, no. 302, pp. 157--175, 1900. [Online]. Available: \url{https://doi.org/10.1080/14786440009463897}
\BIBentrySTDinterwordspacing

\bibitem{hoeffding1965asymptotically}
W.~Hoeffding, ``Asymptotically optimal tests for multinomial distributions,'' \emph{The Annals of Mathematical Statistics}, pp. 369--401, 1965.

\bibitem{bjelakovic2005quantum}
I.~Bjelakovi{\'c}, J.-D. Deuschel, T.~Kr{\"u}ger, R.~Seiler, R.~Siegmund-Schultze, and A.~Szko{\l}a, ``A quantum version of sanov's theorem,'' \emph{Communications in mathematical physics}, vol. 260, pp. 659--671, 2005.

\bibitem{audenaert2008asymptotic}
K.~M. Audenaert, M.~Nussbaum, A.~Szko{\l}a, and F.~Verstraete, ``Asymptotic error rates in quantum hypothesis testing,'' \emph{Communications in Mathematical Physics}, vol. 279, pp. 251--283, 2008.

\bibitem{notzel2014hypothesis}
J.~N{\"o}tzel, ``Hypothesis testing on invariant subspaces of the symmetric group: part i. quantum sanov's theorem and arbitrarily varying sources,'' \emph{Journal of Physics A: Mathematical and Theoretical}, vol.~47, no.~23, p. 235303, 2014.

\bibitem{hayashi2024another}
M.~Hayashi, ``Another quantum version of sanov theorem,'' \emph{arXiv preprint arXiv:2407.18566}, 2024.

\bibitem{kumagai2011quantum}
W.~Kumagai and M.~Hayashi, ``Quantum hypothesis testing for quantum gaussian states: Quantum analogues of chi-square, t and f tests,'' \emph{arXiv preprint arXiv:1110.6255}, 2011.

\bibitem{brandao2020adversarial}
F.~G. Brandao, A.~W. Harrow, J.~R. Lee, and Y.~Peres, ``Adversarial hypothesis testing and a quantum stein’s lemma for restricted measurements,'' \emph{IEEE Transactions on Information Theory}, vol.~66, no.~8, pp. 5037--5054, 2020.

\bibitem{berta2021composite}
M.~Berta, F.~G. Brandao, and C.~Hirche, ``On composite quantum hypothesis testing,'' \emph{Communications in Mathematical Physics}, vol. 385, no.~1, pp. 55--77, 2021.

\bibitem{fujiki2025quantum}
D.~Fujiki, F.~Tanaka, and T.~Sakashita, ``Quantum hypothesis testing for composite alternative hypotheses,'' \emph{Physical Review A}, vol. 111, no.~3, p. 032405, 2025.

\bibitem{hayashi2007error}
M.~Hayashi, ``Error exponent in asymmetric quantum hypothesis testing and its application to classical-quantum channel coding,'' \emph{Physical Review A—Atomic, Molecular, and Optical Physics}, vol.~76, no.~6, p. 062301, 2007.

\bibitem{nagaoka2006converse}
H.~Nagaoka, ``The converse part of the theorem for quantum hoeffding bound,'' \emph{arXiv preprint quant-ph/0611289}, 2006.

\bibitem{long2024equivalence}
P.~Long and J.~Zhao, ``Equivalence, identity, and unitarity checking in black-box testing of quantum programs,'' \emph{Journal of Systems and Software}, vol. 211, p. 112000, 2024.

\bibitem{burgholzer2020advanced}
L.~Burgholzer and R.~Wille, ``Advanced equivalence checking for quantum circuits,'' \emph{IEEE Transactions on Computer-Aided Design of Integrated Circuits and Systems}, vol.~40, no.~9, pp. 1810--1824, 2020.

\bibitem{barnett2009quantum}
S.~M. Barnett and S.~Croke, ``Quantum state discrimination,'' \emph{Advances in Optics and Photonics}, vol.~1, no.~2, pp. 238--278, 2009.

\bibitem{bae2015quantum}
J.~Bae and L.-C. Kwek, ``Quantum state discrimination and its applications,'' \emph{Journal of Physics A: Mathematical and Theoretical}, vol.~48, no.~8, p. 083001, 2015.

\bibitem{chefles2000quantum}
A.~Chefles, ``Quantum state discrimination,'' \emph{Contemporary Physics}, vol.~41, no.~6, pp. 401--424, 2000.

\bibitem{bergou2010discrimination}
J.~A. Bergou, ``Discrimination of quantum states,'' \emph{Journal of Modern Optics}, vol.~57, no.~3, pp. 160--180, 2010.

\bibitem{huang2024certifying}
H.-Y. Huang, J.~Preskill, and M.~Soleimanifar, ``Certifying almost all quantum states with few single-qubit measurements,'' in \emph{2024 IEEE 65th Annual Symposium on Foundations of Computer Science (FOCS)}.\hskip 1em plus 0.5em minus 0.4em\relax IEEE, 2024, pp. 1202--1206.

\bibitem{yu2022statistical}
X.-D. Yu, J.~Shang, and O.~G{\"u}hne, ``Statistical methods for quantum state verification and fidelity estimation,'' \emph{Advanced Quantum Technologies}, vol.~5, no.~5, p. 2100126, 2022.

\bibitem{buadescu2019quantum}
C.~B{\u{a}}descu, R.~O'Donnell, and J.~Wright, ``Quantum state certification,'' in \emph{Proceedings of the 51st Annual ACM SIGACT Symposium on Theory of Computing}, 2019, pp. 503--514.

\bibitem{gretton2012kernel}
A.~Gretton, K.~M. Borgwardt, M.~J. Rasch, B.~Sch{\"o}lkopf, and A.~Smola, ``A kernel two-sample test,'' \emph{The Journal of Machine Learning Research}, vol.~13, no.~1, pp. 723--773, 2012.

\bibitem{zhu2019universal}
S.~Zhu, B.~Chen, P.~Yang, and Z.~Chen, ``Universal hypothesis testing with kernels: Asymptotically optimal tests for goodness of fit,'' in \emph{The 22nd International Conference on Artificial Intelligence and Statistics}.\hskip 1em plus 0.5em minus 0.4em\relax PMLR, 2019, pp. 1544--1553.

\bibitem{zhu2021asymptotically}
S.~Zhu, B.~Chen, Z.~Chen, and P.~Yang, ``Asymptotically optimal one-and two-sample testing with kernels,'' \emph{IEEE Transactions on Information Theory}, vol.~67, no.~4, pp. 2074--2092, 2021.

\bibitem{wilde2013quantum}
M.~M. Wilde, \emph{Quantum information theory}.\hskip 1em plus 0.5em minus 0.4em\relax Cambridge university press, 2013.

\bibitem{cheng2024invitation}
H.-C. Cheng, N.~Datta, N.~Liu, T.~Nuradha, R.~Salzmann, and M.~M. Wilde, ``An invitation to the sample complexity of quantum hypothesis testing,'' \emph{arXiv preprint arXiv:2403.17868}, 2024.

\bibitem{holevo1973statistical}
A.~S. Holevo, ``Statistical decision theory for quantum systems,'' \emph{Journal of multivariate analysis}, vol.~3, no.~4, pp. 337--394, 1973.

\bibitem{hayashi2002optimal}
M.~Hayashi, ``Optimal sequence of quantum measurements in the sense of stein's lemma in quantum hypothesis testing,'' \emph{Journal of Physics A: Mathematical and General}, vol.~35, no.~50, p. 10759, 2002.

\bibitem{kimmel2017hamiltonian}
S.~Kimmel, C.~Y.-Y. Lin, G.~H. Low, M.~Ozols, and T.~J. Yoder, ``Hamiltonian simulation with optimal sample complexity,'' \emph{npj Quantum Information}, vol.~3, no.~1, p.~13, 2017.

\bibitem{muller2013quantum}
M.~M{\"u}ller-Lennert, F.~Dupuis, O.~Szehr, S.~Fehr, and M.~Tomamichel, ``On quantum r{\'e}nyi entropies: A new generalization and some properties,'' \emph{Journal of Mathematical Physics}, vol.~54, no.~12, 2013.

\bibitem{wilde2014strong}
M.~M. Wilde, A.~Winter, and D.~Yang, ``Strong converse for the classical capacity of entanglement-breaking and hadamard channels via a sandwiched r{\'e}nyi relative entropy,'' \emph{Communications in Mathematical Physics}, vol. 331, pp. 593--622, 2014.

\bibitem{nielsen2010quantum}
M.~A. Nielsen and I.~L. Chuang, \emph{Quantum computation and quantum information}.\hskip 1em plus 0.5em minus 0.4em\relax Cambridge university press, 2010.

\bibitem{guctua2020fast}
M.~Gu{\c{t}}{\u{a}}, J.~Kahn, R.~Kueng, and J.~A. Tropp, ``Fast state tomography with optimal error bounds,'' \emph{Journal of Physics A: Mathematical and Theoretical}, vol.~53, no.~20, p. 204001, 2020.

\bibitem{haah2016sample}
J.~Haah, A.~W. Harrow, Z.~Ji, X.~Wu, and N.~Yu, ``Sample-optimal tomography of quantum states,'' in \emph{Proceedings of the forty-eighth annual ACM symposium on Theory of Computing}, 2016, pp. 913--925.

\bibitem{hiai1981sufficiency}
F.~Hiai, M.~Ohya, and M.~Tsukada, ``Sufficiency, kms condition and relative entropy in von neumann algebras,'' \emph{Pacific Journal of Mathematics}, vol.~96, no.~1, pp. 99--109, 1981.

\bibitem{csiszar2011information}
I.~Csisz{\'a}r and J.~K{\"o}rner, \emph{Information theory: coding theorems for discrete memoryless systems}.\hskip 1em plus 0.5em minus 0.4em\relax Cambridge University Press, 2011.

\bibitem{hoeffding1994probability}
W.~Hoeffding, ``Probability inequalities for sums of bounded random variables,'' \emph{The collected works of Wassily Hoeffding}, pp. 409--426, 1994.

\bibitem{cover1999elements}
T.~M. Cover, \emph{Elements of information theory}.\hskip 1em plus 0.5em minus 0.4em\relax John Wiley \& Sons, 1999.

\end{thebibliography}

\ifarxiv
\appendix

\section{Proofs}
\label{section:Proofs}

We will make heavy use of the following relationship between the trace distance and the fidelity of a quantum state \cite{wilde2013quantum}
\begin{lemma}
    \label{lemma:FidelityTraceNormRelation}
    $\rho, \sigma \in \mathcal{D}(\mathcal{H})$, then 
    \begin{equation}
        \frac{1}{2} \oneNorm{\rho - \sigma} \leq \sqrt{1 - F(\rho, \sigma)}.
    \end{equation}
\end{lemma}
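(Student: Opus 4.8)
The plan is to recognize this as the upper Fuchs--van de Graaf inequality and to prove it by reducing the mixed-state bound to an elementary pure-state calculation via purification. First I would invoke Uhlmann's theorem, which states that $\sqrt{F(\rho,\sigma)} = \max|\langle\psi|\phi\rangle|$, where the maximum ranges over all purifications $\ket{\psi}$ of $\rho$ and $\ket{\phi}$ of $\sigma$ on an enlarged space $\mathcal{H}\otimes\mathcal{H}'$. I would then fix a pair of purifications $\ket{\psi},\ket{\phi}$ attaining this maximum, so that $|\langle\psi|\phi\rangle| = \sqrt{F(\rho,\sigma)}$.

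The key structural fact is that trace distance is monotone (non-increasing) under the partial trace, since tracing out $\mathcal{H}'$ is a completely positive trace-preserving map and trace distance obeys the data-processing inequality. Because $\rho = \Tr_{\mathcal{H}'}[\ket{\psi}\bra{\psi}]$ and $\sigma = \Tr_{\mathcal{H}'}[\ket{\phi}\bra{\phi}]$, this yields
\[
    \tfrac12\oneNorm{\rho-\sigma} = T(\rho,\sigma) \leq T\bigl(\ket{\psi}\bra{\psi},\,\ket{\phi}\bra{\phi}\bigr).
\]

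It then remains only to evaluate the trace distance between two pure states. Restricting to the (at most) two-dimensional subspace $\mathrm{span}\{\ket{\psi},\ket{\phi}\}$ and setting $c = |\langle\psi|\phi\rangle|$, the traceless Hermitian operator $\ket{\psi}\bra{\psi} - \ket{\phi}\bra{\phi}$ has eigenvalues $\pm\sqrt{1-c^2}$, so its trace norm is $2\sqrt{1-c^2}$ and hence $T(\ket{\psi}\bra{\psi},\ket{\phi}\bra{\phi}) = \sqrt{1-c^2}$. Substituting $c^2 = F(\rho,\sigma)$ gives
\[
    \tfrac12\oneNorm{\rho-\sigma} \leq \sqrt{1 - F(\rho,\sigma)},
\]
as claimed.

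I expect the main obstacle to be justifying Uhlmann's theorem, which is the one genuinely nontrivial ingredient; in a self-contained writeup I would either cite it, since it is standard in the referenced text \cite{wilde2013quantum}, or supply its short proof using the fact that the purifications of a fixed state differ only by an isometry on the reference system $\mathcal{H}'$. The remaining ingredients---the data-processing inequality for trace distance and the two-by-two eigenvalue computation---are elementary and require no further machinery.
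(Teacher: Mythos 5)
Your proof is correct, but note that the paper does not actually prove this lemma: it is the upper Fuchs--van de Graaf inequality, stated in the appendix with a citation to Wilde's textbook and used as a black box throughout. Your argument --- Uhlmann's theorem to realize $\sqrt{F(\rho,\sigma)}$ as the overlap of optimal purifications, monotonicity of trace distance under the partial trace (data processing), and the eigenvalue computation showing that two pure states with overlap $c$ satisfy $\tfrac12\oneNorm{\ket{\psi}\bra{\psi}-\ket{\phi}\bra{\phi}}=\sqrt{1-c^2}$ --- is the standard textbook derivation, essentially the one found in the cited reference, and every step checks out under the paper's squared-fidelity convention $F(\rho,\sigma)=\oneNorm{\sqrt{\rho}\sqrt{\sigma}}^2$, for which Uhlmann's theorem indeed reads $\sqrt{F(\rho,\sigma)}=\max|\langle\psi|\phi\rangle|$. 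The trade-off between the two treatments is clear: the paper's citation keeps the appendix lean, since the inequality is classical and not a contribution of the work; your writeup makes the appendix self-contained, though only partially so, because Uhlmann's theorem and the data-processing inequality for trace distance are themselves nontrivial facts you would end up citing from the same source. There is no gap in your argument.
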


We also have the following quantum version of Pinsker's inequality \cite[Theorem 3.1]{hiai1981sufficiency} 
\begin{lemma}
    \label{lemma:QuantumPinskersInequality}
    \begin{equation}
        D(\rho\| \sigma) \geq \frac{1}{2} \oneNorm{\rho - \sigma}^2.
    \end{equation}
\end{lemma}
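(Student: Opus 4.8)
The plan is to deduce this quantum inequality from its classical counterpart by exploiting two facts: the monotonicity (data-processing) of the quantum relative entropy under measurements, and the variational characterization of the trace distance. The key observation is that a single, well-chosen two-outcome measurement simultaneously (i) can only decrease the relative entropy and (ii) exactly preserves the trace distance, so that the classical Pinsker inequality applied to the induced binary distributions yields the quantum bound with the sharp constant $\tfrac12$.

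First I would record the classical Pinsker inequality in its $\ell_1$ form: for any two probability distributions $P,Q$,
\begin{equation}
    D_{KL}(P\|Q) \geq \tfrac12 \|P - Q\|_1^2 .
\end{equation}
Next I would invoke the variational formula for the trace distance,
\begin{equation}
    T(\rho,\sigma) = \tfrac12\oneNorm{\rho-\sigma} = \max_{0 \preceq \Lambda \preceq I} \Tr[\Lambda(\rho-\sigma)],
\end{equation}
where the maximum is attained by the projector $\Pi$ onto the support of the positive part $(\rho-\sigma)_+$. Consider the two-outcome POVM $\{\Pi, I-\Pi\}$ and let $P_\rho = (\Tr[\Pi\rho],\,\Tr[(I-\Pi)\rho])$ and $P_\sigma=(\Tr[\Pi\sigma],\,\Tr[(I-\Pi)\sigma])$ be the resulting Bernoulli outcome distributions. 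By the choice of $\Pi$ one has $\tfrac12\|P_\rho-P_\sigma\|_1 = |\Tr[\Pi(\rho-\sigma)]| = T(\rho,\sigma)$, and hence $\|P_\rho-P_\sigma\|_1 = \oneNorm{\rho-\sigma}$.

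The heart of the argument is the data-processing (monotonicity) inequality for the quantum relative entropy: any measurement can only decrease it, so $D(\rho\|\sigma) \geq D_{KL}(P_\rho\|P_\sigma)$. Chaining this with classical Pinsker and the trace-distance identity above gives
\begin{equation}
    D(\rho\|\sigma) \;\geq\; D_{KL}(P_\rho\|P_\sigma) \;\geq\; \tfrac12\|P_\rho-P_\sigma\|_1^2 \;=\; \tfrac12\oneNorm{\rho-\sigma}^2,
\end{equation}
which is the claim.

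The main obstacle is the data-processing inequality itself, which is the only genuinely quantum ingredient and is a deep fact (provable via Lieb's concavity theorem and the operator convexity of $t \mapsto t\ln t$); I would cite it as standard rather than reprove it, and would verify only that the projector onto $(\rho-\sigma)_+$ indeed realizes the trace distance. I would also note that the more elementary route through fidelity, combining $D(\rho\|\sigma)\geq -\ln F(\rho,\sigma) \geq 1-F(\rho,\sigma)$ with \Cref{lemma:FidelityTraceNormRelation}, yields only the weaker constant $\tfrac14$; this is precisely why the measurement-based reduction is needed to obtain the sharp $\tfrac12$.
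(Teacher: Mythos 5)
Your proof is correct, but it is genuinely different from what the paper does: the paper does not prove this lemma at all---it imports it from the literature, citing Theorem 3.1 of \cite{hiai1981sufficiency}, and uses it only as a black box. Your measurement-based reduction is the standard self-contained argument, and every step checks out: the projector $\Pi$ onto the support of $(\rho-\sigma)_+$ attains the variational formula for the trace distance; since $\Tr[\rho-\sigma]=0$, the induced binary distributions satisfy $\|P_\rho-P_\sigma\|_1 = 2\,\Tr[\Pi(\rho-\sigma)] = \oneNorm{\rho-\sigma}$; and monotonicity of the quantum relative entropy under the (CPTP) measurement channel combined with classical Pinsker delivers the constant $\tfrac12$. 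What each route buys: the paper's citation keeps its appendix minimal, which is reasonable since the lemma is used only in the discussion (to argue that a hypothetical test achieving the exponent $D(\rho\|\sigma)$ would imply a trace-distance exponent no better than the one already achieved in \Cref{theorem:QUHT_OneSample_EntMeas}); your argument, by contrast, makes transparent exactly \emph{why} the sharp constant survives quantization---the Helstrom measurement preserves the trace distance exactly while the relative entropy can only contract---and your closing observation is also correct that the more elementary fidelity route, $D(\rho\|\sigma) \geq -\ln F(\rho,\sigma) \geq 1-F(\rho,\sigma)$ together with \Cref{lemma:FidelityTraceNormRelation}, yields only the weaker constant $\tfrac14$. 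One small caveat you should state explicitly: when $\mathrm{supp}(\rho) \not\subseteq \mathrm{supp}(\sigma)$ one has $D(\rho\|\sigma)=+\infty$ and the claim is trivial, so the chain of inequalities need only be run in the finite case.
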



\subsection{One-Sample Generic Result}
\label{appendix:OneSampleGeneric}

Here we will prove a generic result which shows that any  concentration result for quantum state tomography of a particular form gives us a test achieving a certain type two error exponent in the  QUHT setting.

\begin{lemma}
    \label{lemma:OneSamp_GenericConcentrationResult}
    Given $\rho$ the density operator of $\sigma$ under $H_0$, suppose there is a method of constructing $\sh_m$ from $\sigma^{\otimes m}$ such that a concentration result of the following form holds: $\forall t > 0$, and for a constant $C$ and a function $g(m) = \poly(m)$
    \begin{equation}
        \label{equation:GenericConcentrationResult}
        \Proba{\oneNorm{\sh_m - \sigma} \geq t} \leq g(m) \Exp{-m C t^2}
    \end{equation}

    Then using the test 
    \begin{align}
        \hat{H}_0&: \oneNorm{\sh_m - \rho} \leq c_m\\
        \hat{H}_1&: \oneNorm{\sh_m - \rho} > c_m
    \end{align}

    Given $\alpha_m > 0$, if we took $c_m = O\left(\frac{ \ln^{1/2}(1/\alpha)}{\sqrt{m}}\right)$, we would get  
    \begin{equation}
        \Proba{\hat{H}_1 | H_0} \leq \alpha
    \end{equation}

    And for $m$ large enough, 
    \begin{equation}
        \Proba{\hat{H}_0 | H_1} \leq g(m) \Exp{-m\  C \oneNorm{\rho - \sigma}^2 + O\left(m^{1/2}\right)}
    \end{equation}
\end{lemma}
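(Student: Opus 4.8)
The plan is to analyze the two error probabilities separately, in each case reducing the relevant decision event to a deviation of the tomographic estimator $\sh_m$ from the \emph{true} state $\sigma$, so that the single concentration hypothesis \eqref{equation:GenericConcentrationResult} can be invoked directly.

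First I would handle the type I error. Under $H_0$ the true state is $\sigma = \rho$, so $\sh_m$ concentrates around $\rho$ itself, and the declaration $\hat H_1$ is exactly the event $\{\oneNorm{\sh_m - \rho} > c_m\} = \{\oneNorm{\sh_m - \sigma} > c_m\}$. Applying \eqref{equation:GenericConcentrationResult} with $t = c_m$ gives $\Proba{\hat{H}_1 \mid H_0} \leq g(m)\Exp{-mCc_m^2}$. Requiring this to be at most $\alpha$ amounts to $mCc_m^2 \geq \ln g(m) + \ln(1/\alpha)$; since $g(m) = \poly(m)$, the choice $c_m = \Theta\big(\sqrt{(\ln g(m) + \ln(1/\alpha))/(mC)}\big) = O(\ln^{1/2}(1/\alpha)/\sqrt{m})$ suffices and, crucially, forces $c_m \to 0$.

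Next I would bound the type II error. Under $H_1$ the estimator concentrates around the true $\sigma \neq \rho$, and the declaration $\hat H_0 = \{\oneNorm{\sh_m - \rho} \le c_m\}$ can be pushed onto a deviation event via the (reverse) triangle inequality for the trace norm: $\oneNorm{\sh_m - \sigma} \ge \oneNorm{\rho - \sigma} - \oneNorm{\sh_m - \rho} \ge \oneNorm{\rho - \sigma} - c_m$. Hence $\{\hat H_0\} \subseteq \{\oneNorm{\sh_m - \sigma} \ge \oneNorm{\rho - \sigma} - c_m\}$. For $m$ large enough that $c_m < \oneNorm{\rho - \sigma}$ (guaranteed since $c_m \to 0$), the threshold on the right is positive and \eqref{equation:GenericConcentrationResult} applies with $t = \oneNorm{\rho - \sigma} - c_m$, giving $\Proba{\hat{H}_0 \mid H_1} \le g(m)\Exp{-mC(\oneNorm{\rho - \sigma} - c_m)^2}$. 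Expanding the exponent as $-mC(\oneNorm{\rho-\sigma} - c_m)^2 = -mC\oneNorm{\rho-\sigma}^2 + 2mCc_m\oneNorm{\rho-\sigma} - mCc_m^2$ and using $c_m = O(m^{-1/2})$, the cross term is $O(m^{1/2})$ and the last term is $O(1)$, so both collapse into the claimed additive $O(m^{1/2})$ correction.

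The main care points, and the step I expect to be most delicate, concern the calibration of $c_m$: it must shrink fast enough that the type II threshold $\oneNorm{\rho-\sigma} - c_m$ stays positive and the advertised $O(m^{1/2})$ slack is honest, yet slowly enough (at least at the rate dictated by $\ln g(m) + \ln(1/\alpha)$) to keep the type I error below $\alpha$. A genuinely subtle point is the polylogarithmic factor hidden in $c_m$: a $\poly(m)$ prefactor $g(m)$ strictly contributes a $\sqrt{\ln m}$ into the cross term $2mCc_m\oneNorm{\rho-\sigma}$, which one must either absorb into the stated $O(m^{1/2})$ (treating the logarithmic growth as negligible against $m^{1/2}$) or suppress by treating $\alpha$ as a fixed constant so that $c_m = O(\ln^{1/2}(1/\alpha)/\sqrt{m})$ is exactly of order $m^{-1/2}$.
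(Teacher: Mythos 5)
Your proposal is correct and follows essentially the same route as the paper's proof: the type I error is bounded by applying the concentration inequality directly with $t = c_m$ (since $\sigma = \rho$ under $H_0$), and the type II error is bounded via the reverse triangle inequality to reach the event $\{\oneNorm{\sh_m - \sigma} \geq \oneNorm{\rho - \sigma} - c_m\}$, with the cross term absorbed into the $O(m^{1/2})$ slack. Your closing remark about the $\sqrt{\ln m}$ contribution from $g(m) = \poly(m)$ is a fair point of care; the paper handles it the same way, by treating $\alpha$ as fixed and letting the $\ln g(m)$ term ride along inside the exponent.
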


\begin{proof}
    The type I error can be written in terms of the probability of having $\sh_m$ far from $\sigma$, which we can bound directly using the concentration result. Specifically, we have 
    \begin{align}
        \Proba{\hat{H}_1 | H_0} &= \Prob\big[\oneNorm{\sh_m - \rho} > c_m\  \big| \ \sigma = \rho \big]\\
        &=\Proba{\oneNorm{\sh_m - \sigma} > c_m}\\
        &\leq g(m) \Exp{-m \ C\  c_m^2}
    \end{align}

    Then taking 
    \begin{align}
        c_m &= \left(\frac{\ln(g(m)/\alpha)}{C m}\right)^{1/2}\\
        &= O\left(\frac{\ln(g(m)) \cdot \ln^{1/2}(1/\alpha)}{m^{1/2}}\right)
    \end{align}

    We would get $\Proba{\hat{H}_1 | H_0} \leq \alpha$. 

    For the type II error, we have 
    \begin{align}
        \Proba{\hat{H}_0 | H_1} &= \Proba{\oneNorm{\sh_m - \rho} \leq c_m}\\
        &\leq \Proba{ \oneNorm{\sigma - \rho} - \oneNorm{\sh_m - \sigma} \leq c_m }\\
        &= \Proba{ \oneNorm{\sh_m - \sigma} \geq \oneNorm{\sigma - \rho} - c_m }
    \end{align}

    And since $\rho \neq \sigma$ and $\sigma, \rho$ fixed in this setting, and since $c_m \to 0$ as $m \to \infty$, then for $m$ large enough $c_m < \oneNorm{\rho - \sigma}$. When this condition is satisfied, we have 
    \begin{align}
        \Proba{\hat{H}_0 | H_1} \leq g(m) \Exp{-C m \left(\oneNorm{\rho - \sigma} - c_m\right)^2}\\
        \leq g(m) \Exp{-C \ m \left(\oneNorm{\rho - \sigma}^2 - 2c_m\right)}\\
        = \Exp{-m \ C \oneNorm{\rho - \sigma}^2 + \frac{2 m^{1/2}  \ln^{1/2}(1/\alpha)}{C^{1/2} } + \ln(g(m)) }
    \end{align}
\end{proof}

Note that \Cref{theorem:Qubit_PauliMeas} and   \Cref{theorem:QUHT_OneSample_IndepMeas} are direct corollaries of this result using \Cref{lemma:PauliConcentration} and \Cref{lemma:Tomography_IndepMeas} respectively. And as corollary of \Cref{lemma:Tomography_EntMeas}, we have the following concentration result in terms of the trace norm:
\begin{corollary}
    \label{corollary:Tomography_EntMeas_TraceNorm}
    Using \Cref{lemma:Tomography_EntMeas}, we can construct $\sh_m$ using $\sigma^{\otimes m}$ such that 
    \begin{equation}
        \Proba{\oneNorm{\sigma - \sh_m} \geq t} \leq (m+1)^{3d^2} \Exp{-\frac{m t^2}{2}}
    \end{equation}
\end{corollary}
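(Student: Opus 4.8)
The plan is to transfer the fidelity concentration bound of \Cref{lemma:Tomography_EntMeas} into a trace-norm concentration bound by composing it with the fidelity--trace-norm inequality of \Cref{lemma:FidelityTraceNormRelation}. Concretely, I would construct $\sh_m$ from $\sigma^{\otimes m}$ using exactly the estimator guaranteed by \Cref{lemma:Tomography_EntMeas} (with $\tau = \sigma$ and $\hat{\tau} = \sh_m$), so that the weakened fidelity bound \eqref{equation:Tomography_EntMeas_Weakened} applies verbatim. The whole argument is then a deterministic event-inclusion followed by a change of variables.

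First I would fix $t \in (0,2)$. The key step is a deterministic implication: by \Cref{lemma:FidelityTraceNormRelation}, $\tfrac12\oneNorm{\sigma - \sh_m} \leq \sqrt{1 - F(\sigma, \sh_m)}$, so whenever $\oneNorm{\sigma - \sh_m} \geq t$ we are forced to have $\tfrac{t}{2} \leq \sqrt{1 - F(\sigma, \sh_m)}$, equivalently $F(\sigma, \sh_m) \leq 1 - \tfrac{t^2}{4}$. Hence the event $\{\oneNorm{\sigma - \sh_m} \geq t\}$ is contained in the event $\{F(\sigma, \sh_m) \leq 1 - \tfrac{t^2}{4}\}$, so its probability is bounded above by the probability of the latter.

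Next I would invoke the weakened entangled bound \eqref{equation:Tomography_EntMeas_Weakened} with $\delta = \tfrac{t^2}{4}$, which lies in $(0,1)$ precisely because $t \in (0,2)$. This gives $\Proba{F(\sigma, \sh_m) \leq 1 - \tfrac{t^2}{4}} \leq (m+1)^{3d^2}\Exp{-2m \cdot \tfrac{t^2}{4}} = (m+1)^{3d^2}\Exp{-\tfrac{m t^2}{2}}$, which is exactly the claimed bound. Chaining this with the event inclusion of the previous paragraph completes the estimate for $t \in (0,2)$. For $t > 2$ the event $\{\oneNorm{\sigma - \sh_m} \geq t\}$ has probability zero, since the trace norm of a difference of density operators never exceeds $2$, and the borderline case $t = 2$ follows by a limiting argument; in all cases the stated bound holds.

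There is no real obstacle here: the corollary is essentially a monotone change of variables between fidelity and trace distance. The only points requiring a little care are the admissible range of $t$ (the source lemma requires $\delta \in (0,1)$, forcing the restriction $t \in (0,2)$ and the separate trivial treatment of $t \geq 2$) and the orientation of \Cref{lemma:FidelityTraceNormRelation}, which must be used so that a \emph{large} trace norm correctly implies a \emph{small} fidelity rather than the reverse.
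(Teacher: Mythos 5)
Your proof is correct and follows essentially the same route as the paper's: both use \Cref{lemma:FidelityTraceNormRelation} to show that the event $\{\oneNorm{\sigma - \sh_m} \geq t\}$ forces $F(\sigma,\sh_m) \leq 1 - t^2/4$, and then apply the weakened bound \eqref{equation:Tomography_EntMeas_Weakened} with $\delta = t^2/4$. Your explicit handling of the range $t \in (0,2)$ and the trivial cases $t \geq 2$ is a minor point of added care that the paper's terser chain of inequalities omits.
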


\begin{proof}
    \begin{align}
    \Proba{F(\sigma, \sh_m) \leq 1 - \delta} &\geq \Proba{1 - \frac{1}{4}\oneNorm{\sigma - \sh_m}^2 \leq 1 - \delta}\\
    &= \Proba{\oneNorm{\sigma - \sh_m} \geq 2 \delta^{1/2}}\\
    \implies \Proba{\oneNorm{\sigma - \sh_m} \geq t} &\leq (m+1)^{3d^2} \Exp{-\frac{m t^2}{2}}\\
\end{align}
\end{proof}

And \Cref{corollary:Tomography_EntMeas_TraceNorm} combined with \Cref{lemma:OneSamp_GenericConcentrationResult} gives us \Cref{theorem:QUHT_OneSample_EntMeas}. 

\subsection{Two-Sample Generic Result}
\label{appendix:TwoSampleGeneric}

And in a similar manner, we can prove that the same type of generic concentration result yields a test with a generic bound on the type II error. 

\begin{lemma}
    \label{lemma:TwoSamp_GenericConcentrationResult}
    Given the same general concentration result as in   \Cref{equation:GenericConcentrationResult} for the two sample setting with $k = \min\{m,n\}$, we could use the decision region
    \begin{align}
        \hat{H}_0 &: \oneNorm{\sh_m - \rh_n} \leq c_k \\
        \hat{H}_1 &: \oneNorm{\sh_m - \rh_n} > c_k
    \end{align}

    We assume that $\limsup \frac{n}{m} = a < +\infty$, i.e. that $n$ and $m$ go to $\infty$ with bounded ratio. 
    
    And given $\alpha > 0$, taking $c_k = O\left(\frac{\ln((g(n) + g(m))/\alpha)}{k^{1/2}}\right)$ would yield 
    \begin{equation}
        \Proba{\hat{H}_1 | H_0} \leq \alpha
    \end{equation}

    And 
    {\small
    \begin{equation}
        \Proba{\hat{H}_0 | H_1}  \leq (g(n) + g(m)) \Exp{-k \frac{C \oneNorm{\rho - \sigma}^2}{4} + O\left(k^{1/2}\right)}
    \end{equation}
    }
\end{lemma}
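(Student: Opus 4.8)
The plan is to mirror the one‑sample argument of \Cref{lemma:OneSamp_GenericConcentrationResult}, the essential new ingredient being that the nominal state $\rho$ is itself unknown and must be estimated. First I would invoke the hypothesised concentration bound \Cref{equation:GenericConcentrationResult} twice: once to construct $\sh_m$ from $\sigma^{\otimes m}$ and once to construct $\rh_n$ from $\rho^{\otimes n}$, so that for every $t>0$ we have $\Proba{\oneNorm{\sh_m - \sigma} \geq t} \leq g(m)\Exp{-mCt^2}$ and $\Proba{\oneNorm{\rh_n - \rho} \geq t} \leq g(n)\Exp{-nCt^2}$. The whole argument then reduces to controlling the test statistic $\oneNorm{\sh_m - \rh_n}$ by the two individual estimation errors through the triangle inequality and a union bound; afterwards, replacing $m$ and $n$ by $k=\min\{m,n\}$ in the exponents (legitimate since $m,n\geq k$) yields bounds uniform in the effective sample size.

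For the type I error I would condition on $H_0$, where $\rho=\sigma$, so that $\sh_m$ and $\rh_n$ both estimate the same state. The triangle inequality gives the inclusion $\{\oneNorm{\sh_m-\rh_n}>c_k\}\subseteq\{\oneNorm{\sh_m-\sigma}>c_k/2\}\cup\{\oneNorm{\rh_n-\sigma}>c_k/2\}$, and a union bound followed by the two concentration inequalities produces $\Proba{\hat{H}_1\mid H_0}\leq (g(m)+g(n))\Exp{-kC c_k^2/4}$. Choosing $c_k=\bigl(4\ln((g(m)+g(n))/\alpha)/(Ck)\bigr)^{1/2}=O\!\left(k^{-1/2}\ln^{1/2}((g(m)+g(n))/\alpha)\right)$ forces this quantity to be at most $\alpha$, matching the claimed rate up to the usual logarithmic factors.

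For the type II error I would instead use the reverse triangle inequality $\oneNorm{\sh_m-\rh_n}\geq \oneNorm{\rho-\sigma}-\oneNorm{\sh_m-\sigma}-\oneNorm{\rh_n-\rho}$. Writing $\Delta=\oneNorm{\rho-\sigma}$, acceptance of $H_0$ forces $\oneNorm{\sh_m-\sigma}+\oneNorm{\rh_n-\rho}\geq \Delta-c_k$, so at least one of the two errors must exceed $(\Delta-c_k)/2$; a union bound together with the concentration bounds then gives $\Proba{\hat{H}_0\mid H_1}\leq (g(m)+g(n))\Exp{-kC(\Delta-c_k)^2/4}$, valid once $k$ is large enough that $c_k<\Delta$. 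Expanding $(\Delta-c_k)^2\geq \Delta^2-2\Delta c_k$ and noting that $kc_k=O(k^{1/2})$ up to log factors converts the exponent into $-kC\Delta^2/4+O(k^{1/2})$, which is exactly the stated bound.

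The main obstacle is bookkeeping rather than a deep idea: the factor $4$ in the denominator (versus the bare constant $C$ in the one‑sample type II bound) arises precisely because the slack $\Delta-c_k$ must now be split evenly between two estimation errors, and the resulting factor $\tfrac12$ is squared. I would also rely on the bounded‑ratio assumption $\limsup n/m=a<\infty$ (and its reciprocal, i.e.\ that $m$ and $n$ grow at the same rate) so that $\ln g(m)$ and $\ln g(n)$ are both $O(\ln k)=o(k^{1/2})$; this is what lets the polynomial prefactors be absorbed into the $O(k^{1/2})$ term and guarantees that replacing $m$ and $n$ by $k$ in the exponents does not degrade the exponential rate.
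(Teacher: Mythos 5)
Your proposal is correct and follows essentially the same route as the paper's proof: triangle inequality plus a union bound to split the test statistic into the two estimation errors (with the slack halved, producing the factor of $4$), the same choice $c_k = \bigl(4\ln((g(n)+g(m))/\alpha)/(Ck)\bigr)^{1/2}$, the same expansion $(\oneNorm{\rho-\sigma}-c_k)^2 \geq \oneNorm{\rho-\sigma}^2 - 2\oneNorm{\rho-\sigma}c_k$, and the same use of the bounded-ratio assumption to absorb $\ln(g(m)+g(n))$ into the lower-order term. No substantive differences to report.
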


\begin{proof}
    \begin{align}
        &\Proba{\hat{H}_1| H_0} \\
        &= \Proba{\oneNorm{\sh_m - \sh_n} > c_k}\\
        &\leq \Proba{\oneNorm{\sh_m - \sigma} + \oneNorm{\sh_n - \sigma} > c_k}\\
        &\leq \Proba{\oneNorm{\sh_m - \sigma} > \frac{c_k}{2}} + \Proba{\oneNorm{\sh_n - \sigma} > \frac{c_k}{2}}\\
        &\leq g(m) \Exp{-m\ C \frac{c_k^2}{4}} + g(n) \Exp{-m C\frac{c_k^2}{4}}\\
        &\leq \left(g(n) + g(m)\right)\Exp{-k C \frac{c_k^2}{4}}
    \end{align}

    where $k = \min(m, n)$. 

    Then if we have  
    \begin{align}
        c_k &= \sqrt{\frac{4\ln\left(\frac{g(n) + g(m)}{\alpha}\right)}{Ck}}\\
        &= O\left(\frac{\ln^{1/2}(1/\alpha)}{k^{1/2}}\right)
    \end{align}

    Where we can ignore the $\ln(g(n) + g(m))$ terms so long as as $\limsup \frac{n}{m} = a < +\infty$, so that $\ln(g(n) + g(m))/k \to 0$. Then $c_k \to 0$ as $k \to \infty$

    Now, targeting the type II error, we assume that $\rho$ and $\sigma$ are fixed, and $\rho \neq \sigma$. 

    \begin{align}
        &\Proba{\hat{H}_0 | H_1} \\
        &= \Proba{\oneNorm{\rh_n - \sh_m} \leq c_k}\\
        &\leq \Proba{\oneNorm{\rho - \sigma} - \oneNorm{\rh_n - \rho} - \oneNorm{\sh_m - \sigma} \leq c_k}\\
        &= \Proba{\oneNorm{\rh_n - \rho} + \oneNorm{\sh_m - \sigma} \geq \oneNorm{\rho - \sigma} - c_k}\\
        &\leq \Proba{\oneNorm{\rh_n - \rho} \geq \frac{\oneNorm{\rho - \sigma} - c_k}{2} } \\
        &\ \ + \Proba{\oneNorm{\sh_m - \sigma} \geq \frac{\oneNorm{\rho - \sigma} - c_k}{2}}
    \end{align}

    And since $\rho \neq \sigma$, for $k$ large enough we would have $\oneNorm{\rho - \sigma} - c_k > 0$. Then using the concentration result, we get 
    \begin{align}
        &\Proba{\hat{H}_0 | H_1} \\
        &\leq g(n) \Exp{-n C \frac{(\oneNorm{\rho - \sigma} - c_k)^2}{4}} \\
        &\ \ \ + g(m) \Exp{-mC \frac{ (\oneNorm{\rho - \sigma} - c_k)^2 }{ 4 }}\\
        &\leq \left(g(n) + g(m)\right) \Exp{-kC \frac{\oneNorm{\rho - \sigma}^2  - 2 c_k}{4}}
    \end{align}
\end{proof}

And using \Cref{lemma:TwoSamp_GenericConcentrationResult}, we have \Cref{theorem:Qubit_PauliMeas_TwoSample} \Cref{theorem:QUHT_TwoSample_IndepMeas} and \Cref{theorem:QUHT_TwoSample_EntMeas} as corollaries of using \Cref{lemma:PauliConcentration} \Cref{lemma:Tomography_IndepMeas} and \Cref{lemma:Tomography_EntMeas} respectively.

\subsection{Proof of \texorpdfstring{\Cref{theorem:PureState_Symmetric}}{Theorem 1 (Symmetric Pure State)}}
\label{appendix:PureStateSymmetric_Proof}

Since $\rho$ is a pure state, there exists a unitary operator $U \in \mathcal{L}(\mathcal{H})$ such that $U \rho = \ket{0}\bra{0}$. Similarly, we can apply $U^{\otimes m}$ to $\sigma^{\otimes m}$, such that without loss of generality we can consider $\rho$ to be the pure state $\ket{0}\bra{0}$ (similarly, in the one sample case we can always assume without loss of generality that $\rho$ is diagonalizable in the computational basis). 

Now for $\tau \in \mathcal{D}(\mathcal{H})$, define $M_{\ket{0}}(\tau)$ to be the result of measuring $\tau$ using the positive operator valued measure (POVM) $\{\ket{0}\bra{0}, I - \ket{0}\bra{0}\}$, such that $M_{\ket{0}}(\tau) = 1$ if the outcome is $\ket{0}$, and 0 otherwise. For $i \in [m]$, take $\sigma_i$ to be the $i^{\text{th}}$ copy of $\sigma$ in $\sigma^{\otimes m}$. Then our decision rule would be 
\begin{align}
    \begin{split}
    \hat{H}_0: &\sum_{i=1}^m M_{\ket{0}}(\sigma_i) = m\\
    \hat{H}_1: &\sum_{i=1}^{m} M_{\ket{0}}(\sigma_i) < m.
    \end{split}
\end{align}

We naturally have $\alpha(M_m) = 0$. And by the Born rule, we have 
\begin{align}
    \Proba{M_{\ket{0}}(\sigma_i) = 1} = F(\rho, \sigma),
\end{align}

so that for the type II error, by the independence of the measurements 
\begin{align}
    \beta(M_m) &= \Proba{\sum_{i=1}^m M_{\ket{0}} (\sigma_i) = m}\\
    &= \left[\prod_{i=1}^m \Proba{M_{\ket{0}} (\sigma_i)=1}\right]\\
    &= \left[F(\rho, \sigma)\right]^m.
\end{align}

\subsection{Tightness of Symmetric Hypothesis Testing Error Bound}
\label{appendix:PureStateSymmetric_ErrorBounds}

The following result is essentially what was shown in Appendix F of \cite{cheng2024invitation}.

From \cite{holevo1973statistical} \cite{Helstrom1969} we know that the sum of type I and II errors is lower-bounded by $\eps^*_m := \frac{1}{2} \left(1 - \frac{1}{2} \oneNorm{\rho^{\otimes m} - \sigma^{\otimes m}}\right)$. Then consider the following argument: 

\begin{align}
    \eps^*_m &= \frac{1}{2}\left(1 - \frac{1}{2}\oneNorm{\rho^{\otimes m} - \sigma^{\otimes m} }\right)\\
    &= \frac{1}{2} \cdot \frac{(1 - \frac{1}{2}\oneNorm{\rho^{\otimes m} - \sigma^{\otimes m}}) (1 + \frac{1}{2}\oneNorm{\rho^{\otimes m} - \sigma^{\otimes m}})}{1 + \frac{1}{2}\oneNorm{\rho^{\otimes m} - \sigma^{\otimes m}}}\\
    &= \frac{1}{2} \cdot \frac{1 - \frac{1}{4}\oneNorm{\rho^{\otimes m} - \sigma^{\otimes m}}^2}{1 + \frac{1}{2} \oneNorm{\rho^{\otimes m} - \sigma^{\otimes m}}}\\
    &\geq \frac{1}{2} \cdot\frac{1 - \frac{1}{4}\oneNorm{\rho^{\otimes m} - \sigma^{\otimes m}}^2}{2}\\
    &\geq \frac{1}{4} F(\rho^{\otimes m}, \sigma^{\otimes m})
    \label{equation:EquationToReference}\\
    &= \frac{1}{4} \left[F(\rho, \sigma)\right]^m,
\end{align}

where line \ref{equation:EquationToReference} comes from \Cref{lemma:FidelityTraceNormRelation}. 

\subsection{Proof of \texorpdfstring{\Cref{lemma:PauliConcentration}}{Lemma 1 (Pauli Operator Concentration)}}
\label{appendix:Proof_PauliConcentration}

\begin{proof}

    For all $K \in \{X,Y, Z\}$, suppose that 
    \begin{equation}
        \label{equation:InitialAssumptionPauliConcentrationProof}
        \Proba{\frac{1}{2} \oneNorm{\sigma_K - \hat{\sigma}_K} > \eps} \leq 2 \Exp{-\frac{m\eps^2}{6}};
    \end{equation}

    then we have 
    { \small
    \begin{align}
        \oneNorm{\sigma - \hat{\sigma}} &= \frac{1}{2}\oneNorm{(\sx - \shx) + (\sy - \shy) + (\sz - \shz)}\\
        &\leq \frac{1}{2}\oneNorm{\sx - \shx} + \frac{1}{2}\oneNorm{\sy - \shy} + \frac{1}{2}\oneNorm{\sz + \shz},
    \end{align}
    } 

    Therefore
    { 
    \begin{align}
        &\implies \Proba{\oneNorm{\sigma - \hat{\sigma}} \geq \eps} \\
        &\leq \Proba{\frac{1}{2} (\oneNorm{\sx - \shx} + \oneNorm{\sy - \shy} + \oneNorm{\sz - \shz}) \geq \eps}\\
        &\leq \Proba{\frac{1}{2} \oneNorm{\sx - \shx} \geq \frac{\eps}{3}} + \Proba{\frac{1}{2} \oneNorm{\sy - \shy} \geq \frac{\eps}{3}} \\
        &\ \ \ + \Proba{\frac{1}{2}\oneNorm{\sz - \shz} \geq \frac{\eps}{3}}\\
        &\overset{(a)}{\leq} 3 \left(2 \Exp{-\frac{m\eps^2}{54}}\right)\\
        &= 6 \Exp{-\frac{m\eps^2}{54}},
    \end{align}
    }

    where $(a)$ comes from the application of \Cref{equation:InitialAssumptionPauliConcentrationProof}. 

    Now, all that remains is to prove \Cref{equation:InitialAssumptionPauliConcentrationProof}. Note that 
    \begin{align}
        \frac{1}{2}\oneNorm{\sk - \shk} &= \frac{1}{2}\oneNorm{K \Tr[K\sigma] - K \hat{p}_K(\sigma)}\\
        &= \frac{1}{2}\abs{\Tr[K\sigma] - \hat{r}_K(\sigma)} \cdot \oneNorm{K}\\
        &= \abs{\Tr[K\sigma] - \hat{r}_K(\sigma)}\\
        &= \abs{\E[\hat{r}_K(\sigma)] - \hat{r}_K(\sigma)}.
    \end{align}

    So 
    \begin{align}
        \Proba{\frac{1}{2} \oneNorm{\sk - \shk} \geq \eps} &= \Proba{\abs{\E[p_K(\sigma)] - \hat{p}_K(\sigma) }}\\
        &\leq 2 \Exp{-\frac{m \eps^2}{6}}
    \end{align}
    where the last inequality is an application of Hoeffding's Inequality \cite{hoeffding1994probability}. 
\end{proof}

\subsection{Sanov's Theorem Result}
\label{appendix:AlternativeUsingHiyashiResult}

By \cite[Lemma 1]{hayashi2024another}, there exists a sequence of POVMs $\{A_{\rho, r_m}, I - A_{\rho, r_m}\}_m$ such that 
\begin{equation}
    \Tr[\rho^{\otimes m} (I - A_{\rho, r_m})] \leq (m+1)^{\frac{(d+4)(d-1)}{2}} e^{-m \cdot r_m}.
\end{equation}

Describing the POVM using notation in the paper, we would have
\begin{equation}
    A_{\rho, r_m} = \sum_{(p', \rho') \in S^{c}_{\rho, r_m} \cap \mathcal{R}_m[B]} T_{np', \rho', B}^n.
\end{equation}

Then we define $r_m$ so that the type I error is controlled, i.e. 
\begin{align}
    r_m &= \frac{(d+4)(d-1)}{2} \frac{\ln(m+1)}{m} + \frac{\ln(1/\alpha)}{m}\\
    &= O\left(\frac{\ln(m)}{m}\right),
\end{align}

so that $\Tr[\rho^{\otimes m} (I - A_{\rho, r_m})] \leq \alpha$. 

Now, if we assume that $\sigma \neq \rho$, then using \cite[Corollary 6]{hayashi2024another}, for $0 < R < \hat{D}(\rho\| \sigma)$, there exists $m$ large enough ($r_m$ small enough) such that 
\begin{align}
    \lim_{m \to \infty} \left\{\frac{1}{m} \ln\left(\Tr[\sigma^{\otimes m} A_{\rho, r_m}]\right)\right\} \leq -R.
\end{align}

Then, if we take our decision rule to be the measurement associated with this POVM, it would yield the stated result.

\subsection{Proof of Optimal Constant}
\label{appendix:optimalConstant}

    The (classical) hypothesis testing problem addressed here is 
    \begin{equation}
        \label{equation:ClassicalSimpleHTest}
        \begin{split}
            H_0: Q &= P_0\\
            H_1: Q &= P_1
        \end{split}
    \end{equation}
    
    \begin{claim}
        Let $\eps \in (0,1)$ be our type I error constraint.
        
        For any hypothesis test, $M_n$, of  \ref{equation:ClassicalSimpleHTest}, achieving
        \begin{align}
            \alpha(M_n) &\leq \eps\\
            \liminf_{n \to \infty} \frac{-1}{n} \ln \beta(M_n) &\geq C \oneNorm{P_0 - P_1}^2,
        \end{align}
        for some $C > 0$. Of course the type I error constraint is our
    
        The best possible $C$ that can be achieved across all possible $P_0, P_1 \in \Pcal^d$ is $C = \frac{1}{2}$.
    \end{claim}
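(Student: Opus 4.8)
The plan is to reduce the claim to the tightness of Pinsker's inequality, using both directions of classical Stein's lemma as the bridge between type~II error exponents and relative entropy. Recall that Stein's lemma asserts that for any fixed type~I level $\eps \in (0,1)$, the optimal type~II exponent equals the relative entropy, i.e. $\lim_{n\to\infty} \frac{-1}{n}\ln\beta^*_n(\eps) = D_{KL}(P_0\|P_1)$, where $\beta^*_n(\eps)$ is the smallest type~II error over tests with $\alpha(M_n)\le\eps$. I would use the achievability half to show $C=\tfrac12$ is attainable, and the converse half to show no $C>\tfrac12$ works uniformly over all $P_0,P_1$.

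For achievability, I would fix an arbitrary pair $P_0\neq P_1$ and invoke the achievability direction of Stein's lemma to produce a test sequence with $\alpha(M_n)\le\eps$ whose type~II exponent approaches $D_{KL}(P_0\|P_1)$. Classical Pinsker's inequality, $D_{KL}(P_0\|P_1)\ge \tfrac12\oneNorm{P_0-P_1}^2$, then lower-bounds this exponent by $\tfrac12\oneNorm{P_0-P_1}^2$. Since the pair was arbitrary, the constant $C=\tfrac12$ is achievable across all of $\Pcal^d$, giving one half of the claim.

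For the converse, I would argue by contradiction. If some $C>\tfrac12$ were achievable for every pair $P_0,P_1$, then combining with the converse of Stein's lemma, which caps any admissible type~II exponent at $D_{KL}(P_0\|P_1)$, would force $D_{KL}(P_0\|P_1)\ge C\,\oneNorm{P_0-P_1}^2$ to hold for all $P_0,P_1$ with $C>\tfrac12$. I would then exhibit a one-parameter Bernoulli family that rules this out, taking $P_1=(\tfrac12,\tfrac12)$ and $P_0=(\tfrac12+\delta,\tfrac12-\delta)$. A second-order Taylor expansion as $\delta\to 0$ yields $D_{KL}(P_0\|P_1)=2\delta^2+O(\delta^4)$ while $\oneNorm{P_0-P_1}^2=4\delta^2$, so the ratio $D_{KL}(P_0\|P_1)/\oneNorm{P_0-P_1}^2 \to \tfrac12$ from above. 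Hence for $\delta$ small enough the inequality with constant $C>\tfrac12$ fails, contradicting the assumed achievability and establishing optimality of $C=\tfrac12$.

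The main obstacle is the converse step: one must confirm that $\tfrac12$ is genuinely the optimal constant in Pinsker's inequality, i.e. that $\inf D_{KL}(P_0\|P_1)/\oneNorm{P_0-P_1}^2 = \tfrac12$ with the infimum approached in the limit $P_0\to P_1$. This reduces to verifying that the quadratic terms in the expansion above match with coefficient exactly $\tfrac12$ and that no lower-order term interferes; the algebra is routine but must be carried out with care to ensure the constant is not accidentally loosened.
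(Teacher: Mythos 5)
Your proof is correct, and its core converse argument is the same as the paper's: assume some $C > \tfrac12$ works for all pairs, combine the assumed exponent lower bound with the converse of Stein's lemma (using $\limsup \geq \liminf$) to force $D_{KL}(P_0\|P_1) \geq C \oneNorm{P_0 - P_1}^2$ for all pairs, and contradict this with a pair whose KL-to-squared-trace-distance ratio is arbitrarily close to $\tfrac12$. The difference is in how that last ingredient is obtained: the paper simply cites the sharpness of Pinsker's inequality (existence, for every $\eps > 0$, of $\dot P, \dot Q$ with $D(\dot P\|\dot Q)/\oneNorm{\dot P - \dot Q}^2 \leq \tfrac12 + \eps$) from the literature, whereas you construct it explicitly with the Bernoulli family $P_1 = (\tfrac12,\tfrac12)$, $P_0 = (\tfrac12+\delta,\tfrac12-\delta)$, for which $D_{KL}(P_0\|P_1) = 2\delta^2 + O(\delta^4)$ and $\oneNorm{P_0-P_1}^2 = 4\delta^2$; your expansion is right, so your argument is self-contained where the paper's relies on a citation. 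You also prove something the paper's proof leaves implicit: the achievability half, i.e.\ that $C = \tfrac12$ is actually attainable, via the achievability direction of Stein's lemma followed by Pinsker's inequality — the paper's proof only establishes the impossibility of $C > \tfrac12$, so your version more completely matches the wording of the claim ("the best possible $C$ \ldots is $C = \tfrac12$"). One small caveat on that half: the test produced by Stein's lemma depends on both $P_0$ and $P_1$, which is fine for the simple binary problem the claim is stated for, but is worth flagging since the surrounding paper is about universal tests where the alternative is unknown.
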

    
    We need two things to prove the result. First, we need Pinsker's inequality along with the fact that it is sharp, in the following sense \cite{csiszar2011information}: For any $\eps > 0$, $\exists P,Q \in \Pcal^d$ such that 
    \begin{equation}
        \label{equation:PinskersSharp}
        \frac{1}{2} \leq \frac{D(P\|Q)}{\oneNorm{P -Q}^2} \leq \frac{1}{2} + \eps
    \end{equation}
    
    We will also need the converse portion of Stein's Lemma, which states that for all tests $M_n$ achieving a certain type I error constraint, they must have \cite{cover1999elements}
    \begin{equation}
        \label{equation:SteinsConverse}
        \limsup_{n \to \infty} \frac{-1}{n} \ln \beta(M_n) \leq D(P_0 \| P_1)
    \end{equation}

    \begin{proof}

        Suppose for contradiction that there exists a test $M_n$, such that $\forall P_0, P_1 \in \Pcal^d$, it achieves the type I error constraint, and for some $C > \frac{1}{2}$ we get 
        \begin{equation}
            \label{equation:falseResult}
            \liminf_{n \to \infty} \frac{-1}{n} \ln \beta(M_n) \geq C \oneNorm{P_0-P_1}
        \end{equation}
    
        Let $\delta = C - \frac{1}{2} > 0$. Take $0 < \eps < \delta$, and then using \ref{equation:PinskersSharp}, find $\dot{P}, \dot{Q} \in \Pcal^d$ such that 
        \begin{equation}
            \label{equation:UsingSharpPinskers}
            \frac{D(\dot{P}\| \dot{Q}) \ \ }{\oneNorm{\dot{P} - \dot{Q}}^2} \leq \frac{1}{2} + \eps
        \end{equation}

        Then applying \ref{equation:falseResult} for $\dot{P}, \dot{Q}$, we get:
        \begin{align}
            \limsup_{n \to \infty} \frac{-1}{n} \ln \beta(M_n) &\geq \liminf_{n \to \infty} \frac{-1}{n} \ln \beta(M_n)\\
            &\geq C \oneNorm{\dot{P} - \dot{Q}}^2\\
            &= \frac{1 + 2\delta}{2}\oneNorm{\dot{P} - \dot{Q}}^2 \\
            &\geq \left(\frac{1 + 2 \delta}{2}\right) \left(\frac{2}{1 + 2\eps}\right) D(\dot{P}\| \dot{Q})\\
            &> D(\dot{P} \| \dot{Q})
        \end{align}
    
        which violates \ref{equation:SteinsConverse}, thus giving us a contradiction. 
        
    \end{proof}

\fi

\end{document}